\title[]{Anytime Safe Reinforcement Learning}
\renewcommand\labelenumi{(\roman{enumi})}
\renewcommand\theenumi\labelenumi
\newcounter{algosavedequation}
\newcommand{\algoequations}{%
  \setcounter{algosavedequation}{\value{equation}+1}%
  \setcounter{equation}{0}%
  \renewcommand{\theequation}{\arabic{algosavedequation}\alph{equation}}
}
\newcommand{\restoreequations}{%
  \setcounter{equation}{\value{algosavedequation}}%
}
\newcommand{\longthmtitle}[1]{\mbox{}\emph{(#1):}}
\newtheorem{assumption}{Assumption}
\newcommand{\real}{\mathbb{R}}
\newcommand\oprocendsymbol{\hbox{$\square$}}
\newcommand\oprocend{\relax\ifmmode\else\unskip\hfill%
\fi\oprocendsymbol}
\newcommand{\Var}{\mathrm{Var}}
\DeclareMathAlphabet{\mymathbb}{U}{BOONDOX-ds}{m}{n}
\newcommand{\setdef}[2]{\{#1 \; : \; #2\}}
\newcommand{\argmax}[2] {\mathrm{arg}\max_{#1}#2}
\newcommand{\argmin}[2] {\mathrm{arg}\min_{#1}#2}
\newcommand\xqed[1]{%
  \leavevmode\unskip\penalty9999 \hbox{}\nobreak\hfill
  \quad\hbox{#1}}
\newcommand\demo{\xqed{$\bullet$}}
\newcommand{\Cc}{{\mathcal{C}}}
\newcommand{\Lc}{{\mathcal{L}}}
\newcommand{\Sc}{{\mathcal{S}}}
\newcommand{\Ac}{{\mathcal{A}}}
\newcommand{\Ic}{{\mathcal{I}}}
\newcommand{\Fc}{{\mathcal{F}}}
\newcommand{\Oc}{{\mathcal{O}}}
\newcommand{\norm}[1]{\left\lVert#1\right\rVert}
\author[ ]{%
 \Name{Pol Mestres}$^1$ \Email{pomestre@ucsd.edu} \\
 \Name{Arnau Marzabal}$^{1,2}$ \Email{amarzabal@ucsd.edu} \\
 \Name{Jorge Cortés}$^1$ \Email{cortes@ucsd.edu}
}
\affil[ ]{$^1$Department of Mechanical and Aerospace Engineering, University of California San Diego}
\affil[ ]{$^2$Centre de Formació Interdisciplinària Superior, Universitat Politècnica de Catalunya}
\begin{document}

\maketitle

\begin{abstract}
  This paper considers the problem of solving constrained
  reinforcement learning problems with anytime guarantees, meaning
  that the algorithmic solution returns a safe policy regardless of
  when it is terminated.  Drawing inspiration from anytime constrained
  optimization, we introduce Reinforcement Learning-based Safe
  Gradient Flow (RL-SGF), an on-policy algorithm which employs
  estimates of the value functions and their respective gradients
  associated with the objective and safety constraints for the current
  policy, and updates the policy parameters by solving a convex
  quadratically constrained quadratic program.  We show that if the
  estimates are computed with a sufficiently large number of episodes
  (for which we provide an explicit bound), safe policies are updated
  to safe policies with a probability higher than a prescribed
  tolerance. We also show that iterates asymptotically converge to a
  neighborhood of a KKT point, whose size can be arbitrarily reduced
  by refining the estimates of the value function and their gradients.
  We illustrate the performance of RL-SGF in a navigation example.
\end{abstract}

\section{Introduction}

Reinforcement Learning (RL) seeks to find an optimal policy for an
agent that interacts in a given environment through a process of trial
and error. At every state, the agent chooses an action, after which it
randomly transitions to a new state and obtains the corresponding
reward.  The optimal policy is that which maximizes a predefined
long-horizon reward.  Formally, this process is modeled as a Markov
Decision Process (MDP) and has exhibited a lot of empirical success in
a variety of application domains.
However, in many safety-critical applications, such as autonomous
driving, robotic manipulation, or frequency control of power systems,
the process of trial and error executed by most RL algorithms can lead
the agent towards unsafe regions, with potentially catastrophic
consequences.  This observation has spurred a lot of interest in the
development of safe reinforcement learning algorithms, which seek to
find the optimal policy and respect safety constraints.

\smallskip \textsc{Literature Review:} The safe RL literature is vast
and, in what follows, we focus on works which are most aligned with
the present manuscript.  For more exhaustive surveys on safe RL, we
refer the reader
to~\citep{LB-MG-AWH-ZY-SZ-JP-APS:22,YL-AH-XL:21,JG-FF:15,SG-LY-YD-GC-FW-JW-AK:22}.
Safety constraints in RL are often expressed in the form of
\textit{cumulative constraints}, i.e., expected cumulative rewards
that need to be kept below a certain threshold over certain time
horizon~\citep{EA:99,JA-DH-AT-PA:17,YC-MG-LJ-MP:17,SP-LC-MCF-AR:19}.
MDPs with such type of constraints are referred to as
Constrained Markov Decision Processes (CMDPs).
A popular approach to solve CMDPs are primal-dual
methods~\citep{SP-MCF-LFOC-AR:23,DD-KZ-TB-MJ:20},
which simultaneously perform a maximization step in the primal
variable and a minimization step in the dual variable
and can be shown to converge to the optimal policy for finite state
and action CMDPs for a special class of probability transition
functions~\citep{DD-XW-ZY-ZW-MJ:21}.
In continuous state-action space,~\citep{SP-LC-MCF-AR:19} also
provides a primal-dual scheme, and shows that if a non-convex unconstrained RL problem 
is solved at every iteration, the algorithm provably converges to the optimal
policy.   
However, solving such unconstrained RL problem at every iteration is computationally intractable.
Primal-dual algorithms do not guarantee the satisfaction of the safety
constraints during the training process. Other works have employed
primal-dual methods to guarantee safety during training, but are
either limited to particular policy
parameterizations~\citep{SZ-TTD-JR:22} or solve a relaxed version of
the problem and hence introduce an optimality
gap~\citep{QB-ASB-MA-AK-VA:22,QB-ASB-VA:23}.
On the other hand,~\citep{JA-DH-AT-PA:17} proposes CPO, an algorithm
purely based on primal variable updates that enjoys safety guarantees
at every iteration.  However, since the exact policy update law is
computationally intensive, the work
provides a practical algorithm based on a first-order approximations of
the objective and constraints that might not satisfy the safety
constraints during training.
Alternatively,~\citep{YL-JD-XL:20} presents IPO, another primal method
that adds the constraints as penalty terms in the objective function,
and also guarantees the satisfaction of safety constraints during
training.  However, it requires a feasible initial policy and does not
possess formal convergence guarantees.
Other primal methods such as~\citep{YC-ON-ED-MG:18}
leverage Lyapunov functions to guarantee the satisfaction of
constraints during training.
However, the method proposed to search for such Lyapunov functions might be computationally
intensive, and convergence guarantees are only given for a
limited class of problems.
Finally,~\citep{WS-VKS-KCK-SS-JL-VG-BMS:24} optimizes over a class of
truncated policies so that unsafe actions have probability zero, but
the restriction to such class of policies also introduces an
optimality gap, which is not formally quantified.

\smallskip \textsc{Statement of Contributions:} We consider the
problem of designing an algorithm that finds the optimal policy of a
constrained RL problem within a parametric family while satisfying the
safety constraints at every iterate. Inspired by recent advances in
anytime constrained optimization, we introduce Reinforcement
Learning-based Safe Gradient Flow (RL-SGF).  At every iteration,
RL-SGF obtains estimates of the value functions and their respective
gradients associated with the objective function and safety constraint
of the constrained RL problem.  Next, RL-SGF updates the policy
parameters by solving a convex quadratically constrained quadratic
program that uses such estimates, and has a closed-form solution.  We
formally show that RL-SGF meets the desired specifications.  First, we
establish two results of independent interest that quantify the
statistical properties of our proposed estimates, including a bound on
their variance and the probability that the distance between the
estimates and the true values is within a tolerance.  Next, we
show that for any prescribed confidence, if the estimates of the
value functions and their gradients are generated with a sufficiently
large number of episodes (with an explicitly computable lower bound),
the next iterate of a safe policy under RL-SGF is a safe policy
with a probability higher than the prescribed confidence.  Moreover,
we show that RL-SGF asymptotically converges to a neighborhood of a
KKT point and that the size of this neighborhood can be made
arbitrarily small by using a sufficiently large number of episodes.
Finally, we illustrate the performance of RL-SGF in a navigation
example.

\section{Notation}

We denote by $\mathbb{Z}_{>0}$, $\real$, and $\real_{\geq0}$ the set
of positive integers, real, and nonnegative real numbers,
respectively.  Given $N\in\mathbb{Z}_{>0}$, we let
$[N]=\{1,2,\hdots,N \}$.  For $x\in\real^n$, $\norm{x}$ denotes its
Euclidean norm.  We let $\mathbf{I}_n$ be the $n$-dimensional identity matrix and
$\text{tanh}(x) = \frac{e^{x}-e^{-x}}{e^{x}+e^{-x}}$.  For a set
$B$ and $x\in\real^n$,
$\text{dist}(x,B) = \inf\limits_{y\in B} \norm{x-y}$.
Given random variables $X,Y$ taking scalar values, $\mathbb{E}[X]$
denotes the expectation of $X$,
$\text{Var}(X) = \mathbb{E}[(X-\mathbb{E}[X])^2]$ denotes its
variance.
%
%
Let $\Sc$ be a set of states, $\Ac$ a set of actions, and
$P:\Sc\times\Ac\times\Sc\to[0,1]$ a probability transition function,
where $P(s, a, s^{\prime})$ represents the probability that the agent
transitions to state $s^{\prime}\in\Sc$ given that it is at state
$s\in\Sc$ and takes action $a\in\Ac$.  We further let
$R_0:\Sc\times\Ac\times\Sc\to\real$ and
$R_1:\Sc\times\Ac\times\Sc\to\real$.  For every $s\in\Sc$, $a\in\Ac$,
and $s^{\prime}\in\Sc$, $R_0(s,a,s^{\prime})$
  %
  %
is the reward associated with completing a task when an agent is at
state $s$, takes action $a$, and transitions to state $s^{\prime}$.
Instead, $R_1(s,a,s^{\prime})$ is the cost associated with a safety
constraint when an agent is at state $s$, takes action $a$, and transitions to
state $s^{\prime}$.  We refer to the tuple $(\Sc,\Ac,P,R_0,R_1)$ as a
Constrained Markov Decision Process (CMDP).
%
%
A policy $\pi$ for the CMDP is a function that maps every state
$s\in\Sc$ to a distribution over the set of actions $\Ac$.  Such
distribution is denoted as $\pi(\cdot|s)$, and $\pi(a|s)$ is the
probability of taking action $a\in\Ac$ at state $s\in\Sc$.

\section{Problem Statement}\label{sec:problem-statement}

Given a CMDP $\mathcal{M}=(\mathcal{S},\mathcal{A},P,R_0,R_1)$, the
goal of the agent is to maximize the cumulative rewards while keeping
the cumulative costs below a certain threshold.  We consider a
parametric class of policies indexed by a vector
$\theta\in\real^d$. We denote the policy associated with $\theta$ as
$\pi_{\theta}$.  Given a distribution $\eta$ of initial states, a
discount factor $\gamma\in(0,1)$, and a time horizon
$T\in\mathbb{Z}_{>0}$, we consider the following problem:
\begin{subequations}
  \begin{align}
    &\min_{\theta\in\real^d} \quad V_0(\theta) = \mathbb{E}_{a \sim
      \pi_{\theta}(\cdot|s), s_0 \sim \eta} \biggl[ \sum_{k=0}^{T}
      -\gamma^k R_0(s_k, a_k, s_{k+1})
      \biggr]\label{eq:safe-RL-problem-parametrized-policies-objective}
    \\ 
    & \; \text{s.t.} \quad \quad V_1(\theta) = \mathbb{E}_{a \sim
      \pi_{\theta}(\cdot|s), s_0 \sim \eta} \biggl[ \sum_{k=0}^{T}
      \gamma^k R_1(s_k, a_k, s_{k+1}) \biggr] \leq
      0\label{eq:safe-RL-problem-parametrized-policies-constraint}. 
  \end{align}
  \label{eq:safe-RL-problem-parameterized-policies}
\end{subequations}
Problem~\eqref{eq:safe-RL-problem-parameterized-policies} seeks to
find the policy $\pi_{\theta}$ that maximizes the expected cumulative
reward given by $R_0$ (note that, for convenience, we have changed the
sign of $R_0$ to
turn~\eqref{eq:safe-RL-problem-parameterized-policies} into a
minimization problem) over $T$ time steps and keeps the expected
cumulative reward given by $R_1$ over $T$ time steps below zero.  The
discount factor $\gamma$ determines how much future rewards are valued
compared to immediate rewards.  In general, both $V_0$ and $V_1$ are
non-convex, which makes
solving~\eqref{eq:safe-RL-problem-parameterized-policies} NP-hard.
Our goal is to develop an algorithm that converges to a KKT point
of~\eqref{eq:safe-RL-problem-parameterized-policies} and is anytime,
meaning that at every iteration, the
constraint~\eqref{eq:safe-RL-problem-parametrized-policies-constraint}
is satisfied with a prescribed confidence.

\section{RL-SGF: Safe RL via Anytime Constrained
  Optimization}\label{sec:proposed-solution-challenges}

Here we present the algorithm to
solve~\eqref{eq:safe-RL-problem-parameterized-policies} in an anytime
fashion.
At each iteration $i\in\mathbb{Z}_{>0}$, we consider the update law
for parameter $\theta_i$ given by $\theta_{i+1}=p(\theta_i)$, where
$p:\real^d\to\real^d$ is defined as:
\begin{subequations}
  \begin{align}
    p(\theta) =
    &\argmin{y\in\real^d}{\nabla V_0(\theta)^\top(y-\theta)
      + \frac{1}{2h}\norm{y-\theta}^2 }
    \\ 
    &\text{s.t.} \ \alpha h V_1(\theta) \! + \! \nabla V_1(\theta)^\top
      (y-\theta) \! + \! \frac{1}{2h}\norm{y-\theta}^2 \leq 0. 
  \end{align}
  \label{eq:p-definition}
\end{subequations}
Note that $p$ is well defined over
$\Fc=\setdef{\theta\in\real^d}{\exists y\in\real^d \ \text{s.t.} \
  \alpha h V_1(\theta) \! + \! \nabla V_1(\theta)^\top (y-\theta) \!
  + \! \frac{1}{2h}\norm{y-\theta}^2 \leq 0}$.  The iterates defined
through~\eqref{eq:p-definition} are inspired by considering
discretizations of the Safe Gradient Flow (SGF)~\citep{AA-JC:24-tac},
which is a continuous-time algorithm for constrained optimization. We
note that~\eqref{eq:p-definition} is also a special case of the Moving
Balls Algorithm (MBA), introduced in~\citep{AA-RS-MT:10}.
Both SGF and MBA are anytime algorithms,
and~\citep{AA-JC:24-tac,AA-RS-MT:10} provide conditions under which
they converge to KKT points of the corresponding optimization problem.
We summarize the relevant properties of the update law $p$ next:
  
\begin{lemma}\longthmtitle{Constraint satisfaction, KKT points, and
      convergence}\label{lem:p-properties}
    Let $V_0$ and $V_1$ be Lipschitz on their domain of definition
    with Lipschitz constants $L_0$ and $L_1$ respectively, and assume
    $V_0$ is lower bounded.
    Let $\alpha>0$ and
    $h\in(0,\min\{ \frac{1}{\alpha},\frac{1}{L_0},\frac{1}{L_1} \} )$.
    For $\theta\in\real^d$ with $V_1(\theta)\leq0$, we have
  \begin{enumerate}
    \item\label{it:constraint-satisfaction} $V_1(p(\theta))\leq0$,
    \item\label{it:kkt-points} $\theta$ is a KKT point
      of~\eqref{eq:safe-RL-problem-parameterized-policies} if and only if
      $p(\theta)=\theta$,
    \item\label{it:convergence} any limit point of the sequence $\{ \theta_k \}_{k\in\mathbb{Z}_{>0}}$ defined as 
    $\theta_0=\theta$ and $\theta_k=p(\theta_{k-1})$ for $k\geq1$ is a KKT point 
    of~\eqref{eq:safe-RL-problem-parameterized-policies}.
  \end{enumerate}
\end{lemma}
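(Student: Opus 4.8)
The plan is to establish the three items in order, using two common tools. First, $\theta$ is always feasible for the program defining $p(\theta)$ when $V_1(\theta)\le 0$, since plugging $y=\theta$ makes its constraint read $\alpha h V_1(\theta)\le 0$; hence $p$ is well defined there, and its objective being strongly convex the minimizer is unique. Second, the $L_i$-Lipschitz continuity of $\nabla V_i$ gives the quadratic upper bound $V_i(y)\le V_i(\theta)+\nabla V_i(\theta)^\top(y-\theta)+\tfrac{L_i}{2}\norm{y-\theta}^2$. For item (i): with $y=p(\theta)$, feasibility in~\eqref{eq:p-definition} gives $\nabla V_1(\theta)^\top(y-\theta)+\tfrac{1}{2h}\norm{y-\theta}^2\le -\alpha h V_1(\theta)$; since $h<1/L_1$ implies $\tfrac{L_1}{2}\norm{y-\theta}^2\le\tfrac{1}{2h}\norm{y-\theta}^2$, the quadratic bound for $V_1$ yields $V_1(y)\le(1-\alpha h)V_1(\theta)$, and $h<1/\alpha$ together with $V_1(\theta)\le 0$ gives $V_1(p(\theta))\le 0$.

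For item (ii), I would note that~\eqref{eq:p-definition} is a feasible convex program (strongly convex objective, convex quadratic constraint), for which a Slater point exists except in the degenerate case $V_1(\theta)=\nabla V_1(\theta)=0$, so its KKT system is necessary and sufficient for optimality. Evaluating that system at $y=\theta$, the $\tfrac{1}{h}(y-\theta)$ and the quadratic terms drop out and it reduces exactly to the existence of $\lambda\ge 0$ with $\nabla V_0(\theta)+\lambda\nabla V_1(\theta)=0$, $\lambda V_1(\theta)=0$, $V_1(\theta)\le 0$, i.e.\ the KKT conditions of~\eqref{eq:safe-RL-problem-parameterized-policies} with $\mu=\lambda$. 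Since the minimizer is unique, $p(\theta)=\theta$ iff $\theta$ is a KKT point of~\eqref{eq:safe-RL-problem-parameterized-policies}.

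For item (iii), set $\theta_k=p(\theta_{k-1})$; item (i) and induction give $V_1(\theta_k)\le 0$ for all $k$. Using that $\theta_{k-1}$ is feasible for the program defining $\theta_k$ with zero objective value there, optimality of $\theta_k$ gives $\nabla V_0(\theta_{k-1})^\top(\theta_k-\theta_{k-1})+\tfrac{1}{2h}\norm{\theta_k-\theta_{k-1}}^2\le 0$, and combining with the quadratic bound for $V_0$ yields the sufficient decrease $V_0(\theta_k)\le V_0(\theta_{k-1})-\bigl(\tfrac{1}{2h}-\tfrac{L_0}{2}\bigr)\norm{\theta_k-\theta_{k-1}}^2$ with positive coefficient (as $h<1/L_0$). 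Since $V_0$ is lower bounded, summing gives $\sum_k\norm{\theta_k-\theta_{k-1}}^2<\infty$, hence $\norm{\theta_k-\theta_{k-1}}\to 0$. For a convergent subsequence $\theta_{k_j}\to\bar\theta$ one then also has $\theta_{k_j+1}\to\bar\theta$ and $V_1(\bar\theta)\le 0$, and passing to the limit in the first-order optimality conditions for $\theta_{k_j+1}=p(\theta_{k_j})$ (using continuity of $\nabla V_0,\nabla V_1,V_1$) produces $\bar\lambda\ge 0$ with $\nabla V_0(\bar\theta)+\bar\lambda\nabla V_1(\bar\theta)=0$ and $\bar\lambda V_1(\bar\theta)=0$; equivalently, one argues $p$ is continuous at $\bar\theta$ and uses (ii).

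I expect the main obstacle to be this last limiting step: keeping the multipliers $\lambda_{k_j}$ bounded (equivalently, continuity of $p$ at $\bar\theta$). This is immediate when the constraint is inactive at $\bar\theta$ (then $\lambda_{k_j}=0$ for large $j$ by complementary slackness) and follows from the stationarity equation when it is active with $\nabla V_1(\bar\theta)\ne 0$; the degenerate case $V_1(\bar\theta)=\nabla V_1(\bar\theta)=0$, where~\eqref{eq:p-definition} has singleton feasible set, must be ruled out by a constraint-qualification hypothesis or handled directly. All of this is precisely the MBA/SGF machinery developed in~\citep{AA-RS-MT:10,AA-JC:24-tac}, which can alternatively be invoked as a black box.
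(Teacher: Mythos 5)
Your proposal is correct and follows essentially the same route as the paper: the descent lemma plus feasibility of $p(\theta)$ for item \ref{it:constraint-satisfaction}, reading off the KKT system of~\eqref{eq:p-definition} at $y=\theta$ for item \ref{it:kkt-points}, and sufficient decrease of $V_0$ combined with continuity of $p$ (invoked from the MBA results of \citep{AA-RS-MT:10}) for item \ref{it:convergence}. Your extra care about Slater's condition in the degenerate case and about multiplier boundedness is a refinement of, not a departure from, the paper's argument, which likewise delegates the continuity of $p$ to \citep[Proposition 2.3(ii)]{AA-RS-MT:10}.
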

\begin{proof}
  \ref{it:constraint-satisfaction}: Since $V_1$ is Lipschitz with Lipschitz constant $L_1$, 
  by~\citep[Lemma 1.2.3]{YN:18} we have 
  \begin{align}\label{eq:V1p-inequality}
      V_1(p(\theta)) \! \leq \! V_1(\theta) \! + \! \nabla
      V_1(\theta)^\top ( p(\theta)-\theta ) \! + \!
      \frac{L_1}{2}\norm{p(\theta)-\theta}^2.
  \end{align}
  Now, by definition, $p(\theta)$ satisfies 
  \begin{align*}
      \alpha h V_1(\theta) + \nabla V_1(\theta)^\top (p(\theta)-\theta) + \frac{1}{2h}\norm{ p(\theta)-\theta }^2 \leq 0,
  \end{align*}
  which together with~\eqref{eq:V1p-inequality} implies
  \begin{align*}
      V_1(p(\theta)) \leq (1-\alpha h)V_1(\theta) - \frac{1}{2}(\frac{1}{h}-L_1)\norm{ p(\theta)-\theta }^2 \leq 0,
  \end{align*}
  where in the last inequality we have used the fact that $h<\frac{1}{L_1}$, $h<\frac{1}{\alpha}$,
  and $V_1(\theta)\leq0$.

  \ref{it:kkt-points}: By definition, $p(\theta)$ solves the KKT optimality conditions for~\eqref{eq:p-definition},
  which read 
  as follows: there exists $u\in\real_{\geq0}$ such that 
  \begin{align*}
      &\nabla V_0(\theta) + u\nabla V_1(\theta) + \frac{1}{h}(1+u)(p(\theta)-\theta) = 0, \\
      &u(\alpha h V_1(\theta) + \nabla V_1(\theta)^T (p(\theta)-\theta) + \frac{1}{2h}\norm{p(\theta)-\theta}^2 ) = 0, \\
      &\alpha h V_1(\theta) + \nabla V_1(\theta)^T (p(\theta)-\theta) + \frac{1}{2h}\norm{p(\theta)-\theta}^2 \leq 0.
  \end{align*}
  Since $p(\theta)=\theta$, $\alpha>0$, and $h>0$, the above equations imply that $\theta$ satisfies the KKT 
  conditions for~\eqref{eq:safe-RL-problem-parameterized-policies}.

  \ref{it:convergence}: Let $k>1$. If $\theta_k = p(\theta_k)$, then $\theta_l=\theta_k$ for all $l\geq k$ and 
  $\{ \theta_k \}_{k\in\mathbb{Z}_{>0}}$ converges to a KKT point by~\ref{it:kkt-points}.
  If $\theta_k \neq p(\theta_k)$, since $V_0$ is Lipschitz with Lipschitz constant $L_0$, 
  by~\citep[Lemma 1.2.3]{YN:18} we have 
  \begin{align}\label{eq:V0p-inequality}
      V_0(p(\theta_k)) \! \leq \! V_0(\theta_k) \! + \! \nabla
      V_0(\theta_k)^\top ( p(\theta_k)-\theta_k ) \! + \!
      \frac{L_0}{2}\norm{p(\theta_k)-\theta_k}^2.
  \end{align}
  By the optimality conditions of~\eqref{eq:hat-p-definition}, it follows
  that $(y-p(\theta_k))^\top \widehat{\nabla V_0}(\theta_k) +
  \frac{1}{h}(p(\theta_k)-\theta_k) \geq 0$ for all
  $y\in\real^d$ that are feasible for~\eqref{eq:p-definition}~\citep[Theorem 12.3]{JN-SW:06}.
  Since $\theta_k$ is feasible for~\eqref{eq:p-definition} (when $\theta=\theta_k$), 
  we have 
  $(\theta_k-p(\theta_k))^\top \widehat{\nabla V_0}(\theta_k) +
  \frac{1}{h}(p(\theta_k)-\theta_k) \geq 0$.
  After substituting in~\eqref{eq:V0p-inequality}, we have 
  \begin{align}\label{eq:V0p-inequality-2}
      V_0(p(\theta_k)) \! \leq \! V_0(\theta_k)
      -\frac{1}{2}( \frac{1}{h} - L_0 )\norm{p(\theta_k)-\theta_k}^2.
  \end{align}
  Now, since $V_1(\theta)\leq0$, by~\ref{it:constraint-satisfaction} we have 
  $V_1(\theta_k)\leq0$ for all $k \geq 1$.
  Therefore, by~\eqref{eq:V0p-inequality-2} we have 
  \begin{align}\label{eq:V0p-inequality-3}
      V_0(p(\theta_k)) \! \leq \! -\frac{1}{2}( \frac{1}{h} - L_0 )\norm{p(\theta_k)-\theta_k}^2,
  \end{align}
  and hence $\{ V_0(\theta_k) \}_{k\in\mathbb{Z}_{>1}}$ is monotonically decreasing.
  Since $V_0$ is bounded, $\{ V_0(\theta_k) \}_{k\in\mathbb{Z}_{>1}}$ is necessarily convergent.
  Let $V_0^* = \lim\limits_{k\to\infty}  V_0(\theta_k)$.
  By taking limits on both sides of~\eqref{eq:V0p-inequality-2}, and since $V_0$ and 
  $p$ are continuous (cf.~\citep[Proposition 2.3 (ii)]{AA-RS-MT:10}), 
  it follows that each limit point $x_{\infty}$ of the sequence 
  $\{ V_0(\theta_k) \}_{k\in\mathbb{Z}_{>1}}$ satisfies 
  $x_{\infty} = p(x_{\infty})$ and by~\ref{it:kkt-points} 
  is therefore a KKT point of~\eqref{eq:safe-RL-problem-parameterized-policies}.
\end{proof}

We note that
Lemma~\ref{lem:p-properties}~\ref{it:constraint-satisfaction} ensures
that the update rule $\theta_{i+1}=p(\theta_i)$ is anytime.
Note also that
$\setdef{\theta\in\real^d}{V_1(\theta)\leq0}\subset\Fc$, which ensures
that $p$ is well-defined for all safe policies.
%
%
Although Lemma~\ref{lem:p-properties} ensures that the update rule $\theta_{i+1}=p(\theta_i)$ 
is anytime and leads to convergence to KKT points of~\eqref{eq:safe-RL-problem-parameterized-policies},
the main difficulty in computing $p(\theta)$
in~\eqref{eq:p-definition} is that $V_1(\theta)$,
$\nabla V_0(\theta)$, and $\nabla V_1(\theta)$ are unknown, and need
to be estimated.
%
%
%
Before introducing such estimates, we make the following assumptions.
\begin{assumption}\longthmtitle{Boundedness of reward
    functions}\label{as:boundedness-reward}
  There exist $B_0,B_1>0$ such that $|R_0(s,a,s^{\prime})|<B_0$ and
  $|R_1(s,a,s^{\prime})|<B_1$, for all $s\in\mathcal{S}$,
  $a\in\mathcal{A}$, and $s^{\prime}\in\mathcal{S}$.
\end{assumption}

\begin{assumption}\longthmtitle{Differentiability and Lipschitzness of
    policy}\label{as:differentiability-lipschitzness-policy}
  For any $a\in\Ac$, $s\in\Sc$, and $\theta\in\real^d$, 
  $\pi_{\theta}(a|s)>0$.
  %
  %
  The function $\Phi_{a,s}:\real^d\to\real$ defined as 
  $\Phi_{a,s}(\theta) = \log \pi_{\theta}(a|s)$ is continuously differentiable
  and there exist positive constants $L$ and $\tilde{B}$ such that 
  \begin{align*}
    &\norm{\nabla \Phi_{a,s}(\theta_1)-\nabla \Phi_{a,s}(\theta_2)}
      \leq L\norm{\theta_1-\theta_2}, \ \forall \
      \theta_1,\theta_2\in\real^d,a\in\Ac,s\in\Sc,
    \\
    &\norm{ \frac{ \partial \Phi_{a,s}(\theta) }{ \partial \theta_i }
      } \leq \tilde{B}, \ \forall\theta\in\real^d, i\in[d],
      a\in\Ac,s\in\Sc. 
  \end{align*}
\end{assumption}

Assumptions~\ref{as:boundedness-reward}
and~\ref{as:differentiability-lipschitzness-policy} are standard in
the literature (cf.~\citep{KZ-AK-HZ-TB:20,QB-WUM-VA:24}).  By the
Policy Gradient Theorem~\citep[Section 13.2]{RSS-AGB:18}, under
Assumption~\ref{as:differentiability-lipschitzness-policy}, the
functions $V_0$ and $V_1$
in~\eqref{eq:safe-RL-problem-parameterized-policies} are
differentiable.  Moreover, Lemma~\ref{lem:lipschitzness-value-functions}
ensures that $V_0$ and $V_1$ are globally Lipschitz,
%
%
with Lipschitz constants that can be computed in terms of $L$, $B_0$,
$B_1$, $\tilde{B}$, $\gamma$, and $T$.  In the sequel, we let
$L_0$ and $L_1$ be the Lipschitz constants of $V_0$ and $V_1$,
respectively.

We are now ready to introduce unbiased estimates of $V_1(\theta)$,
$\nabla V_0(\theta)$, and $\nabla V_1(\theta)$, based on the Policy
Gradient Theorem~\citep[Section 13.2]{RSS-AGB:18}. We also provide
upper bounds on the variances of such estimates, and probabilistic
upper bounds on the distance to their respective means.  Throughout
the paper, all probabilities, expectations and variances are taken
with respect to $a\sim\pi_{\theta}(\cdot|s)$ and $s_0\sim\eta$, and
from now on we do not denote this explicitly for the sake of
simplicity.

\begin{lemma}\longthmtitle{Statistical properties of estimates of the
    value function and their
    gradients}\label{lem:estimator-statistical-properties}
  Suppose that Assumptions~\ref{as:boundedness-reward}
  and~\ref{as:differentiability-lipschitzness-policy} hold.  Let
  $\theta\in\real^d$, $N\in\mathbb{Z}_{>0}$, and consider $N$
  independent episodes of the form
  $[s_0^n, a_0^n, s_1^n, a_1^n, \hdots, s_T^n, a_T^n, s_{T+1}^n]$
  generated with policy $\pi_{\theta}$, where $s_0^n\sim\eta$ and
  $a_i^n \sim \pi_{\theta}(\cdot | s_i^n)$ for all $i\in[T]$ and
  $n\in[N]$.  Let $q\in\{0,1\}$ and define the value function
  estimator
    %
  %
  $\widehat{V_q}(\theta) = \frac{(-1)^{q+1}}{N}\sum_{n=1}^N \sum_{t=0}^T
  \gamma^t R_q(s_t^n,a_t^n,s_{t+1}^n)$.
    %
  %
  Given a function $b:\mathcal{S}\to\real$ (referred to as
  \textit{baseline}) such that $|b(s)|\leq\hat{B}$ for all
  $s\in\mathcal{S}$, consider the value function gradient estimator
  $\widehat{\nabla V_q}(\theta) = \frac{1}{N}\sum_{n=1}^N\sum_{t=0}^T
  \gamma^t \nabla \Phi_{a_t^n,s_t^n}(\theta) \!
  \sum_{t^{\prime}=t}^{T} \! \Big( \gamma^{t^{\prime}-t}
  R_q(s_{t^\prime}^n,a_{t^\prime}^n,s_{t^{\prime}+1}^n) \! - \! b(s_t)
  \Big)$.
  Further define $\tilde{\sigma}_q = B_q \frac{1-\gamma^{T+1}}{1-\gamma}$
  and $\bar{\sigma}_q = \tilde{B} \sum_{t=0}^T \gamma^t \sum_{t^\prime=t}^T ( B_q \gamma^{t^\prime-t} + \hat{B} )$
  for $q\in\{0,1\}$. Then,
  \begin{enumerate}
  \item\label{it:value-function-estimator} The estimator
    $\widehat{V_q}(\theta)$ is unbiased, i.e.,
    $\mathbb{E}[\widehat{V_q}(\theta)] = V_q(\theta)$.  Moreover,
    $\Var(\widehat{V_q}(\theta)) \leq \frac{\tilde{\sigma}_q^2}{N}$
    and for all $\epsilon>0$,
    $\mathbb{P}(|\widehat{V_q}(\theta)-V_q(\theta)|\leq \epsilon) \geq
    1-\exp \Big\{ -\frac{N\epsilon^2}{2 \tilde{\sigma}_q^2 }
    \Big\}$.  \vspace{-0.4cm}
  \item\label{it:gradient-estimator} The estimator
    $\widehat{\nabla V_q}(\theta)$ is unbiased, i.e.,
    $\mathbb{E}[\widehat{\nabla V_q}(\theta)] = \nabla V_q(\theta)$.
    Moreover,
    $\Var(\widehat{\nabla V_q}(\theta)_i) \leq \frac{ \bar{\sigma}_q^2
    }{N}$ for all $i\in[d]$,
    and for all $\epsilon>0$,
    $\mathbb{P}\Big(\norm{ \nabla V_q(\theta) - \widehat{\nabla
        V_q}(\theta) } \leq \epsilon \Big) \geq 1-d\exp\Big\{
    -\frac{N\epsilon^2}{2d\bar{\sigma}_q^2} \Big\}$.
  \end{enumerate}
\end{lemma}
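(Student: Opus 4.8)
The plan is to exploit that both estimators are empirical averages over the $N$ independent episodes: writing $\widehat{V_q}(\theta) = \frac{1}{N}\sum_{n=1}^N Z_n^q$ and $\widehat{\nabla V_q}(\theta) = \frac{1}{N}\sum_{n=1}^N Y_n^q$, where $Z_n^q\in\real$ and $Y_n^q\in\real^d$ each depend only on episode $n$ and are therefore i.i.d., I would derive the three claims in each part from this structure. Unbiasedness will come from identifying the common mean $\mathbb{E}[Z_1^q]$ (resp. $\mathbb{E}[Y_1^q]$); the variance bounds will come from $\Var(\frac{1}{N}\sum_n Z_n^q) = \frac{1}{N}\Var(Z_1^q)$ together with boundedness of the per-episode quantities; and the tail bounds will come from Hoeffding's inequality applied to these bounded i.i.d.\ summands.

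For Part~(i), unbiasedness is immediate from linearity of expectation and the definition of $V_q$ in~\eqref{eq:safe-RL-problem-parameterized-policies}, with the factor $(-1)^{q+1}$ absorbing the sign flip used to turn the objective into a minimization. For the variance, I would bound $|Z_n^q| = |\sum_{t=0}^T\gamma^t R_q(s_t^n,a_t^n,s_{t+1}^n)| \le B_q\sum_{t=0}^T\gamma^t = B_q\frac{1-\gamma^{T+1}}{1-\gamma} = \tilde{\sigma}_q$, using Assumption~\ref{as:boundedness-reward} and the geometric-sum formula, so that $\Var(Z_1^q)\le\mathbb{E}[(Z_1^q)^2]\le\tilde{\sigma}_q^2$ and hence $\Var(\widehat{V_q}(\theta))\le\tilde{\sigma}_q^2/N$. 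The tail bound then follows from Hoeffding's inequality, since $\widehat{V_q}(\theta)$ averages $N$ i.i.d.\ variables taking values in $[-\tilde{\sigma}_q,\tilde{\sigma}_q]$.

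For Part~(ii), the per-episode vector is $Y_n^q = \sum_{t=0}^T\gamma^t\nabla\Phi_{a_t^n,s_t^n}(\theta)\sum_{t'=t}^T(\gamma^{t'-t}R_q(s_{t'}^n,a_{t'}^n,s_{t'+1}^n) - b(s_t^n))$. Unbiasedness I would obtain from the Policy Gradient Theorem~\citep[Section 13.2]{RSS-AGB:18} in its reward-to-go form, noting that the baseline term vanishes in expectation: conditioning on $s_t^n$ gives $\mathbb{E}[\nabla\Phi_{a_t^n,s_t^n}(\theta)\,b(s_t^n)\mid s_t^n] = b(s_t^n)\,\mathbb{E}_{a\sim\pi_{\theta}(\cdot|s_t^n)}[\nabla\log\pi_{\theta}(a|s_t^n)] = b(s_t^n)\nabla(\sum_a\pi_{\theta}(a|s_t^n)) = 0$. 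For the coordinatewise variance, Assumptions~\ref{as:boundedness-reward} and~\ref{as:differentiability-lipschitzness-policy} and $|b(\cdot)|\le\hat{B}$ give $|(Y_n^q)_i|\le\sum_{t=0}^T\gamma^t\tilde{B}\sum_{t'=t}^T(\gamma^{t'-t}B_q + \hat{B}) = \bar{\sigma}_q$, hence $\Var(\widehat{\nabla V_q}(\theta)_i) = \frac{1}{N}\Var((Y_1^q)_i)\le\bar{\sigma}_q^2/N$. For the tail bound on the norm, I would use that $\norm{\nabla V_q(\theta)-\widehat{\nabla V_q}(\theta)}>\epsilon$ forces $|\nabla V_q(\theta)_i-\widehat{\nabla V_q}(\theta)_i|>\epsilon/\sqrt{d}$ for some $i\in[d]$, and then combine a union bound over the $d$ coordinates with coordinatewise Hoeffding (each coordinate being an average of $N$ i.i.d.\ summands lying in an interval of length $2\bar{\sigma}_q$), which yields the factor $d\exp(-N\epsilon^2/(2d\bar{\sigma}_q^2))$.

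The routine parts are the variance and concentration bounds, which reduce to boundedness of the per-episode quantities plus Hoeffding; the one subtlety there is the $\sqrt{d}$-splitting needed to pass from the Euclidean-norm deviation to coordinatewise deviations. The step requiring the most care — and the main obstacle — is the unbiasedness of the gradient estimator: it requires invoking the Policy Gradient Theorem in precisely the reward-to-go form appearing in the definition of $\widehat{\nabla V_q}$, tracking the $\gamma^t$ discounting consistently with the definition of $V_q$, handling the sign conventions, and verifying the score-function cancellation of the baseline.
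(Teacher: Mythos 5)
Your proposal is correct and follows essentially the same route as the paper: unbiasedness via linearity and the Policy Gradient Theorem with baseline, per-episode boundedness giving the variance bounds, and coordinatewise Hoeffding with an $\epsilon/\sqrt{d}$ split for the norm tail bound. The only differences are cosmetic — you bound the per-episode variance by the second moment where the paper invokes Popoviciu's inequality, and you use a union bound over coordinates where the paper cites Fr\'echet's inequality — both yielding identical constants.
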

\begin{proof}
  \ref{it:value-function-estimator}: The estimator
  $\widehat{V_q}(\theta)$ is unbiased because
  $\mathbb{E} \Bigl[ \sum_{t=0}^T \gamma^t R_q(s_t^n,a_t^n,s_{t+1}^n)
  \Bigr] = V_q(\theta) $.  To compute the variance, note that for any
  $n\in[N]$, because of Assumption~\ref{as:boundedness-reward},
  $\bigg\rvert \sum_{t=0}^T \gamma^t R_q(s_t^n,a_t^n,s_{t+1}^n)
  \bigg\rvert \! \leq \! \sum_{t=0}^T \gamma^t
  |R_q(s_t^n,a_t^n,s_{t+1}^n)| \leq B_q
  \frac{1-\gamma^{T+1}}{1-\gamma} = \tilde{\sigma}_q$.
  By Popoviciu's inequality~\citep[Corollary 1]{RB-CD:00}, for any
  $n\in[N]$ we have
  $\Var \Big( \sum_{t=0}^T \gamma^t R_q(s_t^n,a_t^n,s_{t+1}^n) \Big)
  \leq \tilde{\sigma}_q^2$.  Since the $N$ episodes are independent,
  the bound on the variance of $\widehat{V_q}(\theta)$ follows.
  On the other hand, the lower bound on $\mathbb{P}(|\widehat{V_q}(\theta)-V_q(\theta)|\leq \epsilon)$
  follows from Hoeffding's inequality~\citep{WH:63-hoeffding}.
  \\
  ~\ref{it:gradient-estimator}: The fact that
  $\widehat{\nabla V_q}(\theta)$ is unbiased holds by the Policy
  Gradient Theorem with baseline (cf.~\citep[Section
  13.4]{RSS-AGB:18}).
  Now, note that for any $n\in[N]$ and $i\in[d]$,
  \begin{align*}
    &\Bigg\vert \sum_{t=0}^T  \gamma^t \frac{\partial \Phi_{a_t^n,
      s_t^n}(\theta) }{\partial\theta_i} \! \sum_{t^{\prime}=t}^{T}
      \! \Big( \gamma^{t^{\prime}-t} R_q(
      s_{t^{\prime}},a_{t^{\prime}},s_{t^{\prime}+1} ) \! - \!
      b(s_t) \Big) \Bigg\vert  
      \leq \tilde{B} \sum_{t=0}^T \gamma^t \sum_{t^{\prime}=t}^T (
        B_q \gamma^{t^{\prime}-t} + \hat{B} ) = \bar{\sigma}_q. 
  \end{align*}
  By Popoviciu's inequality~\citep[Corollary 1]{RB-CD:00}, and since
  the $N$ episodes are independent, it follows that
  $\Var( \widehat{\nabla V_q}(\theta)_i ) \leq
  \frac{\bar{\sigma}_q^2}{N}$ for all $i\in[d]$.
  Next, 
  by Hoeffding's inequality~\citep{WH:63-hoeffding},
  for any $\epsilon>0$ and $i\in[d]$,
  $\mathbb{P}( |\widehat{\nabla V_q}(\theta)_i - \nabla V_q(\theta)_i|
  \leq \epsilon ) \geq 1- \exp \Bigl\{
  -\frac{N\epsilon^2}{2\bar{\sigma}_q^2} \Bigr\}$.  By Fréchet's
  Inequality~\citep{MF:35},
  $\mathbb{P}\Big( \bigcap_{i=1}^d \Big\{ |\nabla V_q(\theta)_i \! -
  \! \widehat{\nabla V_q}(\theta)_i| \leq \frac{\epsilon}{\sqrt{d}}
  \Big\} \Big) \! \geq \! 1-d \exp \Bigl\{
  -\frac{N\epsilon^2}{2d\bar{\sigma}_q^2} \Bigr\}$.  Now, note that if
  $|\nabla V_q(\theta)_i - \widehat{\nabla V_q}(\theta)_i| \leq
  \frac{\epsilon}{\sqrt{d}}$ for all $i\in[d]$, we have that
  $\norm{\nabla V_q(\theta)-\widehat{\nabla V_q}(\theta)} \leq
  \epsilon$, which means that \vspace{-0.4cm}
  \begin{align*}
    \mathbb{P}\Big( \norm{ \nabla V_q(\theta) - \widehat{\nabla
    V_q}(\theta) } \leq \epsilon \Big) \geq  
    \mathbb{P}\Big( \bigcap_{i=1}^d \Big\{ |\nabla V_q(\theta)_i \! -
    \! \widehat{\nabla V_q}(\theta)_i| \leq \frac{\epsilon}{\sqrt{d}}
    \Big\}  \Big) \! 
    \geq 1-d\exp\Big\{ -\frac{N\epsilon^2}{2d\bar{\sigma}_q^2} \Big\}. 
  \end{align*}
  \vspace{-0.4cm}
\end{proof}

A bound on the variance similar to that of
Lemma~\ref{lem:estimator-statistical-properties}~\ref{it:gradient-estimator}
appears in~\citep{TZ-HH-GN-MS:12} for Gaussian policies.
Given the estimators introduced in
Lemma~\ref{lem:estimator-statistical-properties},
%
%
we propose the update law $\theta_{i+1} = \hat{p}(\theta_i)$, where
$\hat{p}:\real^d\to\real^d$ is obtained by replacing
$V_1, \nabla V_1$, and $\nabla V_0$ in~\eqref{eq:p-definition} by
their estimates, i.e.,
\begin{subequations}
  \begin{align}
    \hat{p}(\theta) =
    &\argmin{y\in\real^d}{ \widehat{\nabla
      V_0}(\theta)^\top(y-\theta) +
      \frac{1}{2h}\norm{y-\theta}^2 }
    \\ 
    &\text{s.t.} \ \alpha h\widehat{V_1}(\theta) \! + \!
      \widehat{\nabla V_1}(\theta)^\top \! (y-\theta) \! + \!
      \frac{1}{2h}\norm{y-\theta}^2 \leq 0. 
  \end{align}
  \label{eq:hat-p-definition}
\end{subequations}
Note that $\hat{p}$ is well defined on
$\hat{\Fc}:=\setdef{\theta\in\real^d}{\exists y\in\real^d \
  \text{s.t.} \ \alpha h \widehat{V_1}(\theta) \! + \!
  \widehat{\nabla V_1}(\theta)^\top (y-\theta) \! + \!
  \frac{1}{2h}\norm{y-\theta}^2 \leq 0}$. Furthermore,
$\setdef{\theta\in\real^d}{\widehat{V_1}(\theta) \leq 0}\subset
\hat{\Fc}$, which ensures that $\hat{p}$ is defined for all policies
that are estimated to
satisfy~\eqref{eq:safe-RL-problem-parametrized-policies-constraint}.
Our approach is summarized in Algorithm~\ref{alg:anytime-safe-rl}, and
since it takes inspiration from considering discretizations of the
Safe Gradient Flow (SGF)~\citep{AA-JC:24-tac}, we refer to it as
Reinforcement Learning-based Safe Gradient Flow (RL-SGF).

\begin{algorithm}
  \caption{ RL-SGF }\label{alg:anytime-safe-rl}
  \begin{algorithmic}
    \STATE \textbf{Parameters}: $\alpha$, $h$, $T$, $k$, and a sequence $\{ N_i \}_{i=1}^k$,
    \textbf{Initial Policy Parameter}: $\theta_1$;
    \FOR{$i\in[1,\hdots,k]$}
    \STATE Generate $N_i$ episodes of length $T+1$ using policy $\pi_{\theta_i}$;
    \STATE Compute $\widehat{V_1}(\theta_i)$ as defined Lemma~\ref{lem:estimator-statistical-properties}~\ref{it:value-function-estimator}
    and $\widehat{\nabla V_1}(\theta_i)$, $\widehat{\nabla V_0}(\theta_i)$ as defined in Lemma~\ref{lem:estimator-statistical-properties}~\ref{it:gradient-estimator};
    \STATE $\theta_{i+1} \leftarrow \hat{p}(\theta_i)$;
    \ENDFOR
  \end{algorithmic}
\end{algorithm}
%
%

\vspace{-0.25cm}

\section{Anytime Safety and Convergence Properties of RL-SGF}

In this section we study the anytime and convergence properties of
Algorithm~\ref{alg:anytime-safe-rl}.  We fix $\alpha>0$,
$h \in(0,\min\{ \frac{1}{L_0}, \frac{1}{L_1}, \frac{1}{\alpha} \} )$,
and assume that for any $\theta\in\real^d$, the estimates
$\widehat{V_1}(\theta)$, $\widehat{\nabla V_0}(\theta)$, and
$\widehat{\nabla V_1}(\theta)$ are computed according to
Lemma~\ref{lem:estimator-statistical-properties} with $N$ different
episodes of length $T+1$ of the policy $\pi_{\theta}$.
First we provide a closed-form expression for the update maps
$\hat{p}$ and~$p$.

\smallskip
\begin{lemma}\longthmtitle{Closed-form of update 
    maps}\label{lem:closed-form-expression-next-iterate}
  Let $\theta\in\real^d$ and suppose that Slater's condition holds
  for~\eqref{eq:hat-p-definition}~\footnote{ I.e.,
  $\exists \ y\in\real^d$ with $\alpha h \widehat{V_1}(\theta) + 
  \widehat{\nabla V_1}(\theta)^\top (y-\theta)+\frac{1}{2h}\norm{y-\theta}^2<0$
  (cf.~\cite[Section 5.2.3]{SB-LV:09}) }.
  %
  %
  Define
  $\hat{A}_{\theta} = \norm{\widehat{\nabla V_1}(\theta)}^2
  -2\alpha\widehat{V_1}(\theta)$,
  $\hat{B}_{\theta} = 2 \hat{A}_{\theta}$,
  $\hat{C}_{\theta} = 2 \widehat{\nabla V_1}(\theta)^\top
  \widehat{\nabla V_0}(\theta) \! - \! \norm{\widehat{\nabla
      V_0}(\theta)}^2 \! - \! 2\alpha\widehat{V_1}(\theta)$,
  $\hat{\Delta}_{\theta} = 4 \norm{\widehat{\nabla V_1}(\theta) \! -
    \! \widehat{\nabla V_0}(\theta)}^2 \hat{A}_{\theta}$, and
  $\hat{u}_{\theta} =
  \frac{-\hat{B}_{\theta}+\sqrt{\hat{\Delta}_{\theta}}}{2\hat{A}_{\theta}}$
  if $\hat{C}_{\theta}<0$, and $\hat{u}_{\theta} =0$ otherwise.
  Then, if $\hat{A}_{\theta} \! > \! 0$,
  $\hat{p}(\theta) \! = \! \theta \! - \! h\frac{ \widehat{\nabla V_0}(\theta) + \hat{u}_{\theta}\widehat{\nabla V_1}(\theta) }{ 1 + \hat{u}_{\theta} }$,
  and if $\hat{A}_{\theta} \! = \! 0$, 
  $\hat{p}(\theta) \! = \! \theta \! - \! h\widehat{\nabla V_1}(\theta)$.
    If Slater's condition holds for~\eqref{eq:p-definition},
    an analogous expression holds for $p$, with the estimates
    replaced by their true values.
\end{lemma}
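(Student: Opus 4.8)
The plan is to solve the convex QCQP in~\eqref{eq:hat-p-definition} explicitly via its KKT conditions. By Slater's condition, strong duality holds and the KKT conditions are necessary and sufficient, so there exists a multiplier $\hat{u}_\theta \geq 0$ with
\begin{align*}
  &\widehat{\nabla V_0}(\theta) + \hat{u}_\theta \widehat{\nabla V_1}(\theta) + \tfrac{1}{h}(1+\hat{u}_\theta)(\hat{p}(\theta)-\theta) = 0, \\
  &\hat{u}_\theta\Big( \alpha h \widehat{V_1}(\theta) + \widehat{\nabla V_1}(\theta)^\top(\hat{p}(\theta)-\theta) + \tfrac{1}{2h}\norm{\hat{p}(\theta)-\theta}^2 \Big) = 0,
\end{align*}
together with primal feasibility. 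The first (stationarity) equation is linear in $\hat{p}(\theta)-\theta$ and can be solved for it in terms of $\hat{u}_\theta$: when $1+\hat{u}_\theta \neq 0$ (which always holds since $\hat{u}_\theta\geq0$),
\[
  \hat{p}(\theta)-\theta = -h\,\frac{\widehat{\nabla V_0}(\theta) + \hat{u}_\theta \widehat{\nabla V_1}(\theta)}{1+\hat{u}_\theta}.
\]
This already gives the claimed form; what remains is to pin down $\hat{u}_\theta$.

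Next I would determine $\hat{u}_\theta$ from complementary slackness. There are two cases. If the unconstrained minimizer $y = \theta - h\widehat{\nabla V_0}(\theta)$ is feasible — equivalently, substituting it into the constraint gives a nonpositive value, which after algebra is exactly the condition $\hat{C}_\theta \geq 0$ — then $\hat{u}_\theta = 0$ and $\hat{p}(\theta) = \theta - h\widehat{\nabla V_0}(\theta)$, consistent with the formula since $\hat{u}_\theta=0$. Otherwise $\hat{C}_\theta < 0$, the constraint is active, and I substitute the expression for $\hat{p}(\theta)-\theta$ into the constraint equation $\alpha h \widehat{V_1}(\theta) + \widehat{\nabla V_1}(\theta)^\top(\hat{p}(\theta)-\theta) + \tfrac{1}{2h}\norm{\hat{p}(\theta)-\theta}^2 = 0$. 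Multiplying through by $(1+\hat{u}_\theta)^2$ and collecting powers of $\hat{u}_\theta$, this becomes a quadratic in $\hat{u}_\theta$ with leading coefficient $\hat{A}_\theta = \norm{\widehat{\nabla V_1}(\theta)}^2 - 2\alpha\widehat{V_1}(\theta)$ (note $\alpha h \widehat{V_1}(\theta) \cdot \tfrac{2}{2h} = \alpha\widehat{V_1}(\theta)$, and the sign works out because the active case forces $\widehat{V_1}(\theta)$ small enough that $\hat A_\theta \geq 0$). When $\hat{A}_\theta > 0$, the quadratic has discriminant $\hat{\Delta}_\theta$ and I take the root $\hat{u}_\theta = \frac{-\hat{B}_\theta + \sqrt{\hat{\Delta}_\theta}}{2\hat{A}_\theta}$; one must check this root is the nonnegative one (the other root is negative, or fails feasibility/optimality), which follows since $\hat{C}_\theta < 0$ is precisely what makes the constant term negative so the product of roots is negative. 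When $\hat{A}_\theta = 0$, the quadratic degenerates to a linear equation; expanding $\tfrac{1}{2h}\norm{\hat p(\theta)-\theta}^2$ and using $\hat A_\theta=0$ causes the $\hat u_\theta^2$ and a matching term to cancel, and solving the residual linear relation yields $\hat{p}(\theta) = \theta - h\widehat{\nabla V_1}(\theta)$ directly.

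The statement for $p$ follows verbatim by replacing $\widehat{V_1}, \widehat{\nabla V_1}, \widehat{\nabla V_0}$ with $V_1, \nabla V_1, \nabla V_0$ throughout, since~\eqref{eq:p-definition} has the identical structure. The main obstacle I anticipate is purely bookkeeping: carrying out the substitution of the rational expression for $\hat{p}(\theta)-\theta$ into the quadratic constraint, clearing the $(1+\hat u_\theta)^2$ denominator, and verifying that the coefficients regroup exactly into $\hat{A}_\theta, \hat{B}_\theta, \hat{C}_\theta$ as defined — in particular checking the sign conventions so that the active-constraint case indeed corresponds to $\hat{C}_\theta < 0$ and guarantees $\hat{A}_\theta \geq 0$, and confirming that the chosen root of the quadratic is the unique feasible KKT multiplier. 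The degenerate case $\hat{A}_\theta = 0$ needs a small separate argument to see the quadratic collapses cleanly to the linear solution.
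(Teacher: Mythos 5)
Your proposal is correct and follows essentially the same route as the paper: Slater plus convexity gives strong duality/KKT, stationarity yields $y^*(u)=\theta-h\frac{\widehat{\nabla V_0}(\theta)+u\widehat{\nabla V_1}(\theta)}{1+u}$, and the multiplier is determined by the quadratic $\hat{A}_{\theta}u^2+\hat{B}_{\theta}u+\hat{C}_{\theta}=0$ with the case split governed by the sign of $\hat{C}_{\theta}$ (the paper obtains this same quadratic as the critical-point equation of the dual function $\ell(u)$, which is equivalent to your complementary-slackness derivation), and your key computations check out, including that feasibility of the unconstrained minimizer is exactly $\hat{C}_{\theta}\geq0$ and that $\hat{C}_{\theta}<0$ forces the nonnegative root. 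The only loose end is the degenerate case $\hat{A}_{\theta}=0$, where your ``residual linear relation'' does not quite exist since $\hat{B}_{\theta}=2\hat{A}_{\theta}=0$ as well; there the constraint's minimum over $y$ equals $-\frac{h}{2}\hat{A}_{\theta}=0$, so the feasible set is the singleton $\{\theta-h\widehat{\nabla V_1}(\theta)\}$ and the conclusion is immediate (the paper's own treatment, sending $\hat{u}_{\theta}\to\infty$, is similarly informal).
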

  \begin{proof}
      Since~\eqref{eq:hat-p-definition} is strongly convex and Slater's condition is satisfied,
      strong duality holds for~\eqref{eq:hat-p-definition} for 
      any $\theta$ such that $\widehat{V_1(\theta)}\leq0$.
      Let $\Lc:\real^d\times\real_{\geq0}\times\real^d$ be the Lagrangian 
      of~\eqref{eq:hat-p-definition}, i.e.,
      \begin{align*}
          \Lc(y,u,\theta) &= \widehat{\nabla V_0(\theta)}^T(y-\theta) + \frac{1}{2h}\norm{y-\theta}^2 + u\Big( \alpha h \widehat{V_1(\theta)} + \widehat{\nabla V_1(\theta)}^T (y-\theta) + \frac{1}{2h}\norm{y-\theta}^2 \Big).
      \end{align*}
      Let $y^*:\real\times\real^d\to\real^d$ be the function mapping every $(u,\theta)\in\real_{\geq0}\times\real^d$ 
      to the minimizer of $\Lc$ in $y$ for that fixed $(u,\theta)$. 
      Since $\Lc$ is quadratic in $y$, the minimizer can be computed in closed-form and is given by
      $y^*(u,\theta) = \theta  - h\frac{ \widehat{\nabla V_0(\theta)} + u\widehat{\nabla V_1(\theta)} }{ 1 + u }$.
      Now, $\hat{p}(\theta)$ is given by $\hat{p}(\theta) = y^*(\hat{u},\theta)$,
      where $\hat{u}$ 
      is the minimizer of the dual problem (cf.~\citep[Section 5.2.3]{SB-LV:09}), i.e., 
      $\hat{u} = \argmax{u\in\real_{\geq0}}{\Lc(y^*(u,\theta),u,\theta)}$,
      which is equivalent to
      $\hat{u} = \argmin{u\in\real_{\geq0}}{\ell(u)}$,
      where 
      $\ell(u) = h\frac{ \norm{ \widehat{\nabla V_0(\theta) } + 
      u\widehat{\nabla V_1(\theta)} }^2 }{2(1+u)} - u \alpha h \widehat{V_1(\theta)}$.
      Now, note that the critical points of $\ell$ can be obtained by finding the roots of the quadratic equation
      $\hat{A}_{\theta} u^2 + \hat{B}_{\theta} u + \hat{C}_{\theta} = 0$,
      which has discriminant $\hat{B}_{\theta}^2-4\hat{A}_{\theta}\hat{C}_{\theta}$ equal to 
      $\hat{\Delta}_{\theta}$.
      Note that if $\hat{A}_{\theta}>0$, the quadratic coefficient is positive, 
      and also $\hat{B}_{\theta}>0$ and $\hat{\Delta}_{\theta}\geq0$.
      If $\hat{C}_{\theta}\geq0$, $\sqrt{\hat{\Delta}_{\theta}} \leq |\hat{B}_{\theta}|$, which means that 
      $\frac{-\hat{B}_{\theta}+\sqrt{\hat{\Delta}_{\theta}} }{2\hat{A}_{\theta}} < 0$. This implies that 
      $\hat{A}_{\theta}^2 u^2 + \hat{B}_{\theta}u + \hat{C}_{\theta} > 0$ for all $u\geq0$, which in turn means 
      that the function $\ell$
      is increasing for all $u\geq0$ and hence $\hat{u}=0$.
      On the other hand, if $\hat{A}_{\theta}=0$, then $\hat{B}_{\theta}=0$, and 
      $\hat{C}_{\theta}=2 \widehat{\nabla V_1}(\theta)^\top
      \widehat{\nabla V_0}(\theta) \! - \! \norm{\widehat{\nabla
      V_0}(\theta)}^2 \! - \norm{\widehat{\nabla V_1(\theta)}}^2=-\norm{ \widehat{\nabla V_1}(\theta) - \widehat{\nabla V_0}(\theta) }^2\leq0$.
      If $\hat{C}_{\theta}=0$, $\widehat{\nabla V_0(\theta)}=\widehat{\nabla V_1}(\theta)$, and 
      $p(\theta)=\theta-\widehat{\nabla V_1(\theta)}$.
      Instead, if $C_{\theta}<0$, then $\ell^\prime(u)<0$ for all $u\geq0$ and 
      $\hat{u}_{\theta}=\infty$, which also means that 
      $p(\theta)=\theta-\widehat{\nabla V_1}(\theta)$.
      By following an analogous argument using the true values of the value functions and their gradients, 
      instead of their estimates, we obtain an analogous expression for $p$. 
    \end{proof}


Note that if $\widehat{V_1}(\theta) \leq -\eta$ 
(for some $\eta\geq0$), then $\hat{A}_{\theta} \geq 2\alpha\eta\geq0$.
In the sequel, we denote by $A_{\theta}$, $B_{\theta}$, $C_{\theta}$,
$\Delta_{\theta}$, and $u_{\theta}$ the quantities analogous to
$\hat{A}_{\theta}$, $\hat{B}_{\theta}$, $\hat{C}_{\theta}$,
$\hat{\Delta}_{\theta}$, and $\hat{u}_{\theta}$, as defined in
Lemma~\ref{lem:closed-form-expression-next-iterate} but substituting
the estimates
by their true values.


\begin{remark}\longthmtitle{Connection with primal-dual methods}{ \rm
  Primal-dual methods, see e.g.,~\cite{SP-LC-MCF-AR:19}, employ the
  Lagrangian function associated to the constrained optimization
  problem~\eqref{eq:safe-RL-problem-parameterized-policies},
  updating the primal variable $\theta$ via gradient descent and the
  dual variable $\lambda$ through gradient ascent.  If the value of
  the dual variable at iteration $i\in\mathbb{Z}_{>0}$ is
  $\lambda_i$, the primal update is
  $\theta_{i+1} = \theta_i - \eta ( \widehat{\nabla V_0}(\theta) +
  \lambda_i \widehat{\nabla V_1}(\theta))$ (here $\eta$ is a
  predefined stepsize).  One can see the parallelism, cf.
  Lemma~\ref{lem:closed-form-expression-next-iterate}, with the
  policy update performed by RL-SGF,
  $ \hat p(\theta) = \theta - \frac{ h }{ 1 + \hat{u}_{\theta} } (
  \widehat{\nabla V_0}(\theta) + \hat{u}_{\theta} \widehat{\nabla
    V_1}(\theta)) $ (note how the other two expressions are
  recovered from this one by setting $\hat{u}_{\theta}= 0$ and
  $\hat{u}_{\theta} = \infty$, respectively).  When comparing both
  algorithms, we can interpret the RL-SGF update rule as a
  modification of the primal step of the primal-dual algorithm,
  where the stepsize $\eta$ is state-dependent and takes the value
  $\eta = \frac{h}{1 + \hat{u}_\theta}$, and
  $\lambda_i = \hat{u}_\theta$. This is consistent with the fact
  that the Safe Gradient Flow~\cite[Section IV.A]{AA-JC:24-tac}
  admits a primal-dual interpretation, where control inputs play the
  role of approximators of the dual variables. The key difference is
  that, while primal-dual methods use a fixed stepsize, RL-SGF
  dynamically adjusts $\eta_\theta$ and $\lambda_i$ at each
  iteration in a state-dependent fashion, in order to ensure that
  the next update remains feasible. This eliminates the need for
  fine-tuning hyperparameters. Our experiments suggest that this
  adaptive approach improves algorithm performance, see
  Figure~\ref{fig:PD-RF-SGF-comparison} below.  Additionally, the
  initialization of the dual variable is critical for the overall
  safety performance of the algorithm. For example, if the dual
  variable is initialized below the optimal value, the primal-dual
  algorithm would inevitably violate the safety constraint during
  the training process, whereas RL-SGF does not require initializing
  any dual variable and is guaranteed to be safe as long as the
  initial policy is safe.  \demo}
\end{remark}

The following result utilizes
Lemma~\ref{lem:estimator-statistical-properties} to provide safety
guarantees for Algorithm~\ref{alg:anytime-safe-rl}.

\begin{proposition}\longthmtitle{Safety guarantees of RL-SGF}\label{prop:safety-guarantees}
  Let $k\in\mathbb{Z}_{>0}$, $i\in[k]$, $\delta\in(0,1)$, and
  $\bar{\sigma}_1$, $\tilde{\sigma}_1$ be as in
  Lemma~\ref{lem:estimator-statistical-properties}. Suppose that
  Assumptions~\ref{as:boundedness-reward}
  and~\ref{as:differentiability-lipschitzness-policy} hold.
  For the $i$th iterate   $\theta_i \in \real^d$ of
  Algorithm~\ref{alg:anytime-safe-rl},
  \begin{enumerate}[leftmargin=0.6cm]
  \item\label{it:safety-guarantees-first} If
    $\widehat{V_1}(\theta_i) \! \leq \! 0$, let
    $N_i \! > \! \max\Bigl\{ \frac{-2 \tilde{\sigma}_1^2 \! \log(\delta) }{
      \hat{M}_i^2 },
      \! \frac{-2 d \bar{\sigma}_1^2
      \! \log(\frac{\delta}{d}) }{ \hat{M}_i^2 } \Bigr\}$, with
    $\hat{M}_i \! = \!  \frac{(1-\alpha h)|\widehat{V_1}(\theta_i)| +
      \frac{1}{2}(\frac{1}{h}-L_1)\norm{ \hat{p}(\theta_i)-\theta_i
      }^2}{1 + \norm{\hat{p}(\theta_i)-\theta_i}}$.

  \item\label{it:safety-guarantees-second} If
    $\widehat{V_1}(\theta_i) \! \geq \! 0$, let
    $N_i \! > \! \max\Bigl\{ \frac{-2 \tilde{\sigma}_1^2 \log(\delta) }{
      \nu^2 }, \frac{-2 d \bar{\sigma}_1^2 \log(\frac{\delta}{d}) }{
      \nu^2 } \Bigr\}$, with $\nu>0$ such that
    \vspace{-0.2cm}
    \begin{align}\label{eq:nu-condition}
      \nu(1+\norm{\hat{p}(\theta_i)-\theta_i}) + (1-\alpha
      h)\widehat{V_1}(\theta_i) < \frac{ ( \frac{1}{h}-L_1 )
      \norm{\hat{p}(\theta_i)-\theta_i}^2 }{2}. 
    \end{align} 
  \end{enumerate}
  In either case, we have
  $\mathbb{P}(V_1(\hat{p}(\theta_i)\leq0)) \geq 1-2\delta$.
  \smallskip
\end{proposition}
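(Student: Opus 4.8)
The plan is to reduce the probabilistic statement to a single deterministic implication combined with the concentration bounds of Lemma~\ref{lem:estimator-statistical-properties}. Specifically, I would exhibit an event---measuring closeness of the estimates $\widehat{V_1}(\theta_i)$, $\widehat{\nabla V_1}(\theta_i)$ to the true quantities $V_1(\theta_i)$, $\nabla V_1(\theta_i)$---on which $V_1(\hat{p}(\theta_i)) \leq 0$ holds for sure, and then show that, under the stated lower bounds on $N_i$, this event has probability at least $1-2\delta$.

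For the deterministic core I would proceed as in the proof of Lemma~\ref{lem:p-properties}~\ref{it:constraint-satisfaction}. Since $\hat{p}(\theta_i)$ is feasible for~\eqref{eq:hat-p-definition}, we have $\widehat{\nabla V_1}(\theta_i)^\top(\hat{p}(\theta_i)-\theta_i) \leq -\alpha h \widehat{V_1}(\theta_i) - \frac{1}{2h}\norm{\hat{p}(\theta_i)-\theta_i}^2$, and since $V_1$ is $L_1$-Lipschitz, \citep[Lemma 1.2.3]{YN:18} gives $V_1(\hat{p}(\theta_i)) \leq V_1(\theta_i) + \nabla V_1(\theta_i)^\top(\hat{p}(\theta_i)-\theta_i) + \frac{L_1}{2}\norm{\hat{p}(\theta_i)-\theta_i}^2$. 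Adding and subtracting $\widehat{V_1}(\theta_i)$ and $\widehat{\nabla V_1}(\theta_i)^\top(\hat{p}(\theta_i)-\theta_i)$, substituting the feasibility inequality, and applying Cauchy--Schwarz yields
\[
V_1(\hat{p}(\theta_i)) \leq (1-\alpha h)\widehat{V_1}(\theta_i) + |V_1(\theta_i) - \widehat{V_1}(\theta_i)| + \norm{\nabla V_1(\theta_i) - \widehat{\nabla V_1}(\theta_i)}\,\norm{\hat{p}(\theta_i)-\theta_i} - \frac{1}{2}\Big(\frac{1}{h}-L_1\Big)\norm{\hat{p}(\theta_i)-\theta_i}^2,
\]
where $1-\alpha h > 0$ and $\frac{1}{h}-L_1 > 0$ by the standing choice of $h$. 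In case~\ref{it:safety-guarantees-first}, $\widehat{V_1}(\theta_i) \leq 0$, so $(1-\alpha h)\widehat{V_1}(\theta_i) = -(1-\alpha h)|\widehat{V_1}(\theta_i)|$; on the event $\{|V_1(\theta_i) - \widehat{V_1}(\theta_i)| \leq \hat{M}_i\} \cap \{\norm{\nabla V_1(\theta_i)-\widehat{\nabla V_1}(\theta_i)} \leq \hat{M}_i\}$ the right-hand side is bounded above by $-(1-\alpha h)|\widehat{V_1}(\theta_i)| + \hat{M}_i(1+\norm{\hat{p}(\theta_i)-\theta_i}) - \frac{1}{2}(\frac{1}{h}-L_1)\norm{\hat{p}(\theta_i)-\theta_i}^2$, which equals $0$ by the definition of $\hat{M}_i$, giving $V_1(\hat{p}(\theta_i)) \leq 0$. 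In case~\ref{it:safety-guarantees-second}, $\widehat{V_1}(\theta_i) \geq 0$, and repeating the same estimate with $\nu$ replacing $\hat{M}_i$, the conclusion $V_1(\hat{p}(\theta_i)) < 0$ follows directly from~\eqref{eq:nu-condition}.

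Finally, I would invoke Lemma~\ref{lem:estimator-statistical-properties}~\ref{it:value-function-estimator} and~\ref{it:gradient-estimator} with $q=1$ and $\epsilon = \hat{M}_i$ (resp.\ $\epsilon = \nu$): the probability that $|V_1(\theta_i) - \widehat{V_1}(\theta_i)| > \epsilon$ is at most $\exp\{-N_i\epsilon^2/(2\tilde{\sigma}_1^2)\}$, and the probability that $\norm{\nabla V_1(\theta_i)-\widehat{\nabla V_1}(\theta_i)} > \epsilon$ is at most $d\exp\{-N_i\epsilon^2/(2d\bar{\sigma}_1^2)\}$. The lower bounds on $N_i$ in the statement are precisely the thresholds making each of these quantities at most $\delta$, so a union bound lower bounds the probability of the complementary (``good'') event by $1-2\delta$, and the deterministic implication from the previous step gives $\mathbb{P}(V_1(\hat{p}(\theta_i)) \leq 0) \geq 1-2\delta$. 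I do not expect a single hard step: the main work is the bookkeeping that makes the error terms collapse exactly against the definition of $\hat{M}_i$ (or against~\eqref{eq:nu-condition}), together with the mild technical point that $\hat{p}(\theta_i)$, $\hat{M}_i$, and $\nu$ depend on the $N_i$ episodes, so the concentration step is understood after conditioning on the realized estimates.
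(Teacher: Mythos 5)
Your proposal is correct and follows essentially the same route as the paper's proof: the same descent-lemma-plus-feasibility-plus-Cauchy--Schwarz bound on $V_1(\hat{p}(\theta_i))$, the same collapse of the error terms against the definition of $\hat{M}_i$ (resp.\ condition~\eqref{eq:nu-condition}), and the same invocation of Lemma~\ref{lem:estimator-statistical-properties} with $\epsilon=\hat{M}_i$ (resp.\ $\nu$), combined via a union/Fr\'echet bound to get $1-2\delta$. Your explicit remark about conditioning on the realized estimates (since $\hat{M}_i$ and $\hat{p}(\theta_i)$ are data-dependent) is a point the paper only addresses informally after the proposition, but it does not change the argument.
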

\begin{proof}
  \ref{it:safety-guarantees-first}: Since $\nabla V_1$ has Lipschitz constant $L_1$, and dropping 
  the dependency on $\theta_i$ for simplicity, by~\citep[Lemma 1.2.3]{YN:18}, and using 
  the Cauchy-Schwartz inequality, we have
  \vspace{-0.1cm}
  \begin{align}\label{eq:v-bound}
    &V_1(\hat{p}) \leq
      \widehat{V_1} \! + \! V_1 \! - \! \widehat{V_1} \! + \! \widehat{\nabla V_1}^\top ( \hat{p}-\theta_i )
      \! + \! \norm{ \nabla V_1 \! - \! \widehat{\nabla V_1} } \norm{ \hat{p} \! - \! \theta_i } \! + \! \frac{L_1}{2}\norm{\hat{p}\!-\!\theta_i}^2.
  \end{align}
  \vspace{-0.1cm}
  From~\eqref{eq:p-definition} we have,
  $\alpha h \widehat{V_1}(\theta_i) + \widehat{\nabla
    V_1}(\theta_i)^\top ( \hat{p}(\theta_i)-\theta_i ) \leq
  -\frac{1}{2h}\norm{\hat{p}(\theta_i)-\theta_i}^2$, which
  from~\eqref{eq:v-bound} implies
  \begin{align}\label{eq:v-bound-2}
    &V_1(\hat{p}) \leq 
      (1-\alpha h)\widehat{V_1} \! + \! V_1 \! - \! \widehat{V_1}
      \! + \! \norm{ \nabla V_1 \! - \! \widehat{\nabla V_1} } \norm{
      \hat{p} \! - \! \theta_i } \! - \!
      \frac{1}{2}(\frac{1}{h}-L_1)\norm{\hat{p}\!-\!\theta_i}^2. 
  \end{align}
  \vspace{-0.15cm}
  Now, by
  Lemma~\ref{lem:estimator-statistical-properties}~\ref{it:value-function-estimator},
  if $N_i > \frac{-2 \tilde{\sigma}_1^2 \log(\delta) }{\hat{M}_i^2}$,
  $\mathbb{P}(|\widehat{V_1}(\theta_i)-V_1(\theta_i)| \leq \hat{M}_i) \geq 1-\delta$.
  Similarly, by Lemma~\ref{lem:estimator-statistical-properties}~\ref{it:gradient-estimator},
  if
  $N_i \! > \!  \frac{-2 d \bar{\sigma}_1^2 \log(\frac{\delta}{d}) }{
    \hat{M}_i^2 }$,
  $\mathbb{P}\Big(\norm{ \widehat{\nabla V_1}(\theta_i) - \nabla
    V_1(\theta_i) } \leq \hat{M}_i \Big) \! \geq \! 1-\delta$.
  By~\eqref{eq:v-bound-2}, if
  $|\widehat{V_1}(\theta_i)\!-\!V_1(\theta_i)| \! \leq \! \hat{M}_i$,
  $\norm{\widehat{\nabla V_1}(\theta_i)\!-\!\nabla V_1(\theta_i)} \! \leq \!
  \hat{M}_i$, and $\widehat{V_1}(\theta_i) \! \leq \! 0$, we have
  $V_1(\hat{p}(\theta_i)) \! \leq \! 0$. Then, since $N_i$ satisfies 
  the bound in~\ref{it:safety-guarantees-first}, the result follows by   
  Fréchet's Inequality~\citep{MF:35}. \\
  \ref{it:safety-guarantees-second}: If $\nu>0$ satisfies~\eqref{eq:nu-condition},
  $|\widehat{V_1}(\theta_i)-V_1(\theta_i)| \leq \nu$, and
  $\norm{\widehat{\nabla V_1}(\theta_i)-\nabla V_1(\theta_i)} \leq
  \nu$, then by~\eqref{eq:v-bound}, $V_1(\hat{p}(\theta_i)) \leq 0$
  even if $\widehat{V_1}(\theta_i) > 0$. By following a similar
  argument as in~\ref{it:safety-guarantees-first} we have that if
  $N_i$ satisfies the lower bound
  on~\ref{it:safety-guarantees-second},
  $\mathbb{P}( V_1(\hat{p}(\theta_i)\leq0) ) \geq 1-2\delta$.
\end{proof}

Proposition~\ref{prop:safety-guarantees}~\ref{it:safety-guarantees-first}
shows that if $\widehat{V_1}(\theta_i)\leq0$ (i.e., we estimate that
the $i$th iterate is safe), then by running a sufficiently large
number of episodes the next iterate of
Algorithm~\ref{alg:anytime-safe-rl} is safe with an arbitrarily high
probability.  On the other hand,
Proposition~\ref{prop:safety-guarantees}~\ref{it:safety-guarantees-second}
provides similar guarantees when we estimate that the policy is unsafe
(i.e., $\widehat{V_1}(\theta_i)\geq0$).  We note that if
$\widehat{ V_1}(\theta_i)\leq0$,~\eqref{eq:nu-condition}
is satisfied by taking $\nu=0$. By continuity, this suggests that~\eqref{eq:nu-condition} is also satisfied in a neighborhood of 
$\setdef{\theta\in\real^d}{\widehat{V_1}(\theta)\leq0}$, and it becomes increasingly 
difficult to satisfy as $\widehat{V_1}(\theta)$ grows, which indicates that 
safety can be ensured in the next iterate as long 
as the safety violation is not too large.

Note that the lower bounds on $N_i$ in
Proposition~\ref{prop:safety-guarantees} depend on
$\hat{V}_1(\theta_i)$, $\widehat{\nabla V_1}(\theta_i)$, and
$\widehat{\nabla V_0}(\theta_i)$, which in turn also depend on $N_i$.
Therefore, we use the result in
Proposition~\ref{prop:safety-guarantees} is as follows: given a number
of episodes $N_i$, we construct the estimates $\hat{V}_1(\theta_i)$,
$\widehat{\nabla V_1}(\theta_i)$, and
$\widehat{\nabla V_0}(\theta_i)$.  If the lower bounds in $N_i$ in
Proposition~\ref{prop:safety-guarantees} hold, the guarantees provided
therein apply.  If they do not, one should increase $N_i$. This
process is guarantees to terminate, because the estimates
$\widehat{V_1}(\theta_i)$, $\widehat{\nabla V_1}(\theta_i)$,
$\widehat{\nabla V_0}(\theta_i)$ converge to the true values
$V_1(\theta_i)$, $\nabla V_1(\theta_i)$, $\nabla V_0(\theta_i)$
respectively as $N_i\to\infty$, so the lower bounds on $N_i$ in
Proposition~\ref{prop:safety-guarantees} are finite as $N_i\to\infty$.
Therefore, there exists a sufficiently large $N_i$ for which the
conditions on $N_i$ in Proposition~\ref{prop:safety-guarantees} hold.
Next we state a result that provides safety guarantees over a finite time horizon.
Its proof follows from Proposition~\ref{prop:safety-guarantees} and 
Fréchet's inequality~\citep{MF:35}.
\begin{corollary}\longthmtitle{Safety guarantees over a finite time horizon}\label{cor:safety-finite-time-horizon}
    Suppose that Assumptions~\ref{as:boundedness-reward} 
    and~\ref{as:differentiability-lipschitzness-policy} hold.
    Assume also that for each $i\in[H]$, Slater's condition for~\eqref{eq:p-definition} and
    ~\eqref{eq:hat-p-definition} hold, $\hat{A}_{\theta_i}>0$, $A_{\theta_i}>0$, 
    and either the assumptions in 
    Proposition~\ref{prop:safety-guarantees}~\ref{it:safety-guarantees-first}
    or Proposition~\ref{prop:safety-guarantees}~\ref{it:safety-guarantees-second} hold.
    Then, under the iterates of Algorithm~\ref{alg:anytime-safe-rl},
    $\mathbb{P}\Big( \bigcap\limits_{i=1}^{H+1} \{ V_1(\theta_i) \leq 0 \}  \Big) \geq 1-2H\delta$.
\end{corollary}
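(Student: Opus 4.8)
The plan is to reduce the statement to a one-step safety guarantee followed by a union bound. For $i\in\{1,\dots,H+1\}$ write $\Omega_i=\{V_1(\theta_i)\le 0\}$, so that the goal is to lower bound $\mathbb{P}\big(\bigcap_{i=1}^{H+1}\Omega_i\big)$. Since the algorithm is initialized at a safe policy, $\mathbb{P}(\Omega_1^c)=0$ (the conclusion tacitly uses $V_1(\theta_1)\le 0$, as otherwise $\Omega_1$ fails deterministically). For $i\in\{2,\dots,H+1\}$ we have $\theta_i=\hat{p}(\theta_{i-1})$, which is well-defined at iteration $i-1$ because Slater's condition and $\hat{A}_{\theta_{i-1}}>0$ (and $A_{\theta_{i-1}}>0$ for the true map) are assumed, so the closed form of Lemma~\ref{lem:closed-form-expression-next-iterate} applies. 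Conditioning on the history $\mathcal{H}_{i-1}$ generated by the episodes of iterations $1,\dots,i-1$, the parameter $\theta_{i-1}$ is $\mathcal{H}_{i-1}$-measurable and the fresh episodes at iteration $i-1$ are independent of $\mathcal{H}_{i-1}$; hence Proposition~\ref{prop:safety-guarantees} applies pointwise in $\theta_{i-1}$: using whichever of cases~\ref{it:safety-guarantees-first} and~\ref{it:safety-guarantees-second} is in force (the corollary's hypotheses guarantee the associated conditions on $N_{i-1}$ hold), we get $\mathbb{P}\big(V_1(\hat{p}(\theta_{i-1}))>0 \mid \mathcal{H}_{i-1}\big)\le 2\delta$, and therefore $\mathbb{P}(\Omega_i^c)=\mathbb{E}\big[\mathbb{P}(V_1(\hat{p}(\theta_{i-1}))>0\mid\mathcal{H}_{i-1})\big]\le 2\delta$ by the tower property.

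With these per-iteration failure probabilities in hand, Fréchet's inequality~\citep{MF:35} (equivalently, a union bound applied to the complements) gives
\begin{align*}
  \mathbb{P}\Big(\bigcap_{i=1}^{H+1}\Omega_i\Big)\ \ge\ 1-\sum_{i=1}^{H+1}\mathbb{P}(\Omega_i^c)\ \ge\ 1-0-2H\delta,
\end{align*}
which is the claimed bound. The same estimate can equivalently be packaged as an induction on $j$ proving $\mathbb{P}\big(\bigcap_{i=1}^{j+1}\Omega_i\big)\ge 1-2j\delta$, where the inductive step writes $\mathbb{P}\big(\bigcap_{i=1}^{j+2}\Omega_i\big)=\mathbb{P}\big(\bigcap_{i=1}^{j+1}\Omega_i\big)-\mathbb{P}\big(\bigcap_{i=1}^{j+1}\Omega_i\cap\Omega_{j+2}^c\big)$ and bounds the last term by $\mathbb{P}(\Omega_{j+2}^c)\le 2\delta$; either presentation works.

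The routine part is the aggregation step itself; the step I expect to require the most care is making the per-iteration application of Proposition~\ref{prop:safety-guarantees} rigorous, since $\theta_i$ is a random variable and the lower bounds on $N_i$ in the proposition are phrased in terms of the random estimates $\widehat{V_1}(\theta_i)$, $\widehat{\nabla V_0}(\theta_i)$, and $\widehat{\nabla V_1}(\theta_i)$. The conditioning-on-$\mathcal{H}_{i-1}$ argument above is precisely what resolves this: given the history, $\theta_i$ is deterministic, the conditions on $N_i$ become deterministic inequalities that hold by the standing hypothesis, and the independence of successive episode batches ensures the conditional probability is exactly the one supplied by the proposition. Everything else — the telescoping, the union bound, and the bookkeeping role of Slater's condition and $A_{\theta_i},\hat{A}_{\theta_i}>0$ — is straightforward.
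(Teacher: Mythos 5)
Your proof is correct and follows essentially the same route the paper indicates: apply Proposition~\ref{prop:safety-guarantees} at each of the $H$ iterations to get a per-step failure probability of at most $2\delta$, then aggregate with Fréchet's inequality (union bound) to obtain $1-2H\delta$, with the safety of $\theta_1$ handled deterministically. Your conditioning-on-the-history argument is a more careful formalization of the per-iteration step than the paper's one-line proof, but it is the same underlying argument, not a different approach.
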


Next we show the convergence of Algorithm~\ref{alg:anytime-safe-rl} to
a neighborhood of a KKT point
of~\eqref{eq:safe-RL-problem-parameterized-policies}.  Based on
Lemma~\ref{lem:p-properties}~\ref{it:kkt-points}, given a tolerance
$\epsilon^*>0$, we seek to find $\theta\in\real^d$ such that
$V_1(\theta)\leq0$ and $\norm{p(\theta)-\theta} \leq \epsilon^*$. As
shown in~\citep[Theorem 4.1]{RA-JMM-BFS:10}, by sufficiently
decreasing $\epsilon^*$, a vector $\theta\in\real^d$ satisfying
$\norm{p(\theta)-\theta} \leq \epsilon^*$ can be made arbitrarily
close to a KKT point.

\begin{proposition}\longthmtitle{Convergence
  guarantees of RL-SFG}\label{prop:convergence-in-expectation}
Suppose that Assumptions~\ref{as:boundedness-reward}
and~\ref{as:differentiability-lipschitzness-policy} hold.
Suppose there exist positive constants $\widehat{\eta_A}$, $\eta_A$
such that $\hat{A}_{\theta_i} \geq \widehat{\eta_A}$ and
$A_{\theta_i} \geq \eta_A$.  Given $\epsilon^*>0$, there exist
constants\footnote{The constants $M_p$, $\bar{M}_p$, and $K_p$ are
  defined as follows.  First, let $M_q=\sqrt{d}\bar{\sigma}_q$,
  $q\in\{ 0, 1\}$,
  $M_A = ( M_1 )^2 + 2\alpha \tilde{\sigma}_1, M_B = 2M_A, M_C =
  2M_{1} M_{0} + 2\alpha M_{1}$,
  $M_{\Delta} = 4 ( M_{1} + M_{0} )^2 M_A$.  Further define
  $\widehat{\eta_B} = 2 \widehat{\eta_A}$,
  $M_u = \frac{ M_B + \sqrt{ M_{\Delta} } }{ 2\eta_A }$,
  $K_A = 2 M_1$, $K_B = 4 M_1$, $K_C = 2 M_1 + 4 M_0$,
  $K_{\Delta} = \frac{ K_B + 2 M_B K_B + 4K_A M_C + 4 M_A K_C }{
    \widehat{\eta_{\Delta}} }$,
  $K_u = \max\Big\{ \frac{ K_A (M_B \! + \! \sqrt{ M_{\Delta}} )
  }{2\eta_A \widehat{\eta_A} } + \frac{ K_{\Delta} + K_B }{2\eta_A},
  \frac{2 K_C}{\eta_B}, \frac{2 K_C}{\widehat{\eta_B}} \Big\}$.
  Then, we define $M_p = h( M_0 + M_u M_1 )$,
  $\bar{M}_p = \frac{2 M_p}{\frac{1}{h}-\frac{L_0}{2}}$, and
  $K_p = 1+K_u\bar{\sigma}_1+M_u+K_u$.  }  $M_p, \bar{M}_p$ and
$K_p$ such that if
\vspace*{-1ex}
\begin{enumerate}
  \setlength{\itemsep}{0pt}
\item $\epsilon>0$ is such that
  $\sqrt{ \bar{M}_p \epsilon } + h K_p\epsilon \leq \epsilon^*$;
\item $k\in\mathbb{Z}_{>0}$ is such that
  $k\geq\frac{2\tilde{\sigma}_0}{M_p\epsilon}$;
\item for each $i\in[k]$, Slater's condition holds
  for~\eqref{eq:p-definition} and~\eqref{eq:hat-p-definition} with
  $\theta=\theta_i$;
\item $\widehat{V_1}(\theta_i)\leq0$;
\item
  $N_i > \max\{ \frac{d\bar{\sigma}_1^2}{\epsilon},
  \frac{d\bar{\sigma}_0^2}{\epsilon}, \frac{ \tilde{\sigma}_1^2
  }{\epsilon} \}$ for all $i\in[k]$ (with $\bar{\sigma}_0$,
  $\bar{\sigma}_1$, and $\tilde{\sigma}_1$ as defined in
  Lemma~\ref{lem:estimator-statistical-properties});
\end{enumerate}
\vspace*{-1ex}
then, there exists $j\in[k]$ such that
$\mathbb{E}[\norm{\hat{p}(\theta_j)-\theta_j}] \leq \sqrt{ \bar{M}_p \epsilon }$. 
Moreover, $\norm{p(\theta_j)-\theta_j} \leq \epsilon^*$.
\end{proposition}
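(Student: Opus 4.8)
The plan is to emulate the deterministic descent argument from Lemma~\ref{lem:p-properties}\ref{it:convergence}, but now account for the noise in the estimates. First I would establish a one-step descent inequality for $V_0$ along the iterates of Algorithm~\ref{alg:anytime-safe-rl}. Starting from $V_0(\hat p(\theta_i)) \le V_0(\theta_i) + \nabla V_0(\theta_i)^\top(\hat p(\theta_i)-\theta_i) + \tfrac{L_0}{2}\norm{\hat p(\theta_i)-\theta_i}^2$ (Lemma 1.2.3 of \citep{YN:18}), I would use the optimality conditions of \eqref{eq:hat-p-definition} exactly as in the proof of Lemma~\ref{lem:p-properties}\ref{it:convergence} — namely $(\theta_i-\hat p(\theta_i))^\top \widehat{\nabla V_0}(\theta_i) + \tfrac1h\norm{\hat p(\theta_i)-\theta_i}^2 \ge 0$ — but since the gradient appearing there is the \emph{estimate} rather than the true gradient, I pick up a cross term $\norm{\nabla V_0(\theta_i)-\widehat{\nabla V_0}(\theta_i)}\,\norm{\hat p(\theta_i)-\theta_i}$ via Cauchy--Schwarz. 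This yields, writing $\Delta_i := \norm{\hat p(\theta_i)-\theta_i}$,
\begin{align*}
V_0(\hat p(\theta_i)) \le V_0(\theta_i) - \Bigl(\tfrac1h - \tfrac{L_0}{2}\Bigr)\Delta_i^2 + \norm{\nabla V_0(\theta_i)-\widehat{\nabla V_0}(\theta_i)}\,\Delta_i.
\end{align*}
Taking expectations and using Jensen together with the variance bound from Lemma~\ref{lem:estimator-statistical-properties}\ref{it:gradient-estimator} — which gives $\mathbb{E}[\norm{\nabla V_0(\theta_i)-\widehat{\nabla V_0}(\theta_i)}] \le \sqrt{d}\,\bar\sigma_0/\sqrt{N_i} \le M_0\sqrt{\epsilon}/\bar\sigma_0 \cdot \ldots$ (bounded using $N_i > d\bar\sigma_0^2/\epsilon$) — the error term is $O(\sqrt\epsilon)$ on each step. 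Summing the telescoping inequality over $i\in[k]$ and using that $V_0$ is bounded (with $|V_0|\le\tilde\sigma_0$ from Assumption~\ref{as:boundedness-reward}, hence total variation at most $2\tilde\sigma_0$) gives $\sum_{i=1}^k \mathbb{E}[\Delta_i^2] \lesssim \tilde\sigma_0 + k\sqrt\epsilon$; dividing by $k$ and invoking $k \ge 2\tilde\sigma_0/(M_p\epsilon)$ forces $\min_{j\in[k]}\mathbb{E}[\Delta_j^2] \le \bar M_p\epsilon$, and then $\mathbb{E}[\Delta_j] \le \sqrt{\bar M_p\epsilon}$ by Jensen. This is the first conclusion.

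For the second conclusion I must pass from $\hat p$ to the true map $p$. The idea is a perturbation bound: $\norm{p(\theta_j)-\theta_j} \le \norm{\hat p(\theta_j)-\theta_j} + \norm{p(\theta_j)-\hat p(\theta_j)}$, and I need to control $\norm{p(\theta)-\hat p(\theta)}$ in terms of the estimation errors $\norm{\widehat{\nabla V_0}(\theta)-\nabla V_0(\theta)}$, $\norm{\widehat{\nabla V_1}(\theta)-\nabla V_1(\theta)}$, and $|\widehat{V_1}(\theta)-V_1(\theta)|$. Using the closed-form expressions from Lemma~\ref{lem:closed-form-expression-next-iterate}, $p(\theta)-\hat p(\theta) = -h\bigl(\tfrac{\nabla V_0 + u_\theta\nabla V_1}{1+u_\theta} - \tfrac{\widehat{\nabla V_0} + \hat u_\theta\widehat{\nabla V_1}}{1+\hat u_\theta}\bigr)$, which I would bound by the triangle inequality into a term controlled directly by the gradient errors plus a term proportional to $|u_\theta - \hat u_\theta|$. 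The multiplier difference $|u_\theta-\hat u_\theta|$ is then bounded via the explicit formulas $u_\theta = \tfrac{-B_\theta+\sqrt{\Delta_\theta}}{2A_\theta}$ (and likewise $\hat u_\theta$), using the uniform lower bounds $A_{\theta_i}\ge\eta_A$, $\hat A_{\theta_i}\ge\widehat\eta_A$ to keep the denominators away from zero, together with the boundedness of all the relevant quantities ($\norm{\nabla V_q}$, $|V_1|$, and their estimates, all bounded by constants like $M_q$, $\tilde\sigma_1$ from Lemma~\ref{lem:estimator-statistical-properties}) — this is precisely what the constants $K_A, K_B, K_C, K_\Delta, K_u$ in the footnote are bookkeeping. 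The upshot is a Lipschitz-type bound $\norm{p(\theta_j)-\hat p(\theta_j)} \le hK_p \cdot (\text{estimation error})$, and since each of $\norm{\widehat{\nabla V_0}-\nabla V_0}$, $\norm{\widehat{\nabla V_1}-\nabla V_1}$, $|\widehat{V_1}-V_1|$ is $O(\sqrt\epsilon)$ in expectation (again by Lemma~\ref{lem:estimator-statistical-properties} and the lower bounds on $N_i$), I get $\mathbb{E}[\norm{p(\theta_j)-\theta_j}] \le \sqrt{\bar M_p\epsilon} + hK_p\epsilon \le \epsilon^*$ by hypothesis~(i). Combining with the feasibility $V_1(\theta_j)\le0$ (which holds with the stated probability by Proposition~\ref{prop:safety-guarantees}, or deterministically if one conditions on the safe event) and invoking \citep[Theorem 4.1]{RA-JMM-BFS:10} shows $\theta_j$ is near a KKT point.

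The main obstacle I anticipate is the multiplier-perturbation bound $|u_\theta - \hat u_\theta| = O(\text{estimation error})$: the map from the noisy data $(\widehat{V_1}, \widehat{\nabla V_1}, \widehat{\nabla V_0})$ to $\hat u_\theta$ goes through a square root $\sqrt{\hat\Delta_\theta}$, which is only locally Lipschitz and degenerates when $\hat\Delta_\theta$ (equivalently $\hat A_\theta$, or $\norm{\widehat{\nabla V_1}-\widehat{\nabla V_0}}$) approaches zero. This is exactly why the uniform lower bounds $\hat A_{\theta_i}\ge\widehat\eta_A$ and $A_{\theta_i}\ge\eta_A$ are imposed as hypotheses — they guarantee a uniform nondegeneracy so that $\sqrt{\hat\Delta_\theta}$ stays comfortably away from its non-smooth point, and then the chain of elementary Lipschitz estimates (for $\hat A_\theta, \hat B_\theta, \hat C_\theta, \hat\Delta_\theta$ in terms of the data, then for $\hat u_\theta$ in terms of those) goes through with the constants $K_\Delta, K_u$ as claimed. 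The second subtlety worth care is whether hypothesis~(ii), $k\ge 2\tilde\sigma_0/(M_p\epsilon)$, is enough to absorb the accumulated per-step error $k\cdot O(\sqrt\epsilon)$ after dividing by $k$ — here one must be careful that the per-step error is actually $O(\epsilon)$ rather than $O(\sqrt\epsilon)$ once the cross term $\norm{\nabla V_0 - \widehat{\nabla V_0}}\Delta_i$ is handled by Young's inequality ($\le \tfrac{1}{2c}\norm{\cdot}^2 + \tfrac c2 \Delta_i^2$) to split off a $\Delta_i^2$ contribution that gets reabsorbed into the left side, leaving a genuine $O(\epsilon)$ residual per step; this is where the definition $M_p = h(M_0 + M_uM_1)$ and the bound $N_i > d\bar\sigma_q^2/\epsilon$ enter quantitatively.
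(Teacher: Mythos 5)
Your proposal follows essentially the same route as the paper's proof: the smoothness-plus-QCQP-optimality descent inequality with the estimation cross term, telescoping/averaging with $k\geq\frac{2\tilde{\sigma}_0}{M_p\epsilon}$ and Jensen to extract an index $j$ with $\mathbb{E}[\norm{\hat{p}(\theta_j)-\theta_j}]\leq\sqrt{\bar{M}_p\epsilon}$, and then the triangle inequality together with a perturbation analysis of the closed form in Lemma~\ref{lem:closed-form-expression-next-iterate}, bounding $|\hat{u}_{\theta}-u_{\theta}|$ by exploiting the lower bounds $\eta_A,\widehat{\eta_A}$, to pass from $\hat{p}$ to $p$ and conclude $\norm{p(\theta_j)-\theta_j}\leq\epsilon^*$. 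The only minor deviation is the handling of the cross term $(\nabla V_0-\widehat{\nabla V_0})^\top(\hat{p}(\theta_i)-\theta_i)$: you absorb it via Young's inequality into the quadratic term, whereas the paper bounds $\norm{\hat{p}(\theta_i)-\theta_i}\leq M_p$ uniformly through the closed-form expression and multiplies by the expected gradient-estimation error; both give an $O(\epsilon)$ per-step residual, differing only in the resulting constants.
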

\smallskip
\vspace{-0.2cm}
\begin{proof}
  First, since $\nabla V_0$ has Lipschitz constant $L_0$,
  by~\citep[Lemma 1.2.3]{YN:18} we have
  $V_0(\hat{p}(\theta_i)) \! \leq \! V_0(\theta_i) \! + \! \nabla
  V_0(\theta_i)^\top ( \hat{p}(\theta_i)-\theta_i ) \! + \!
  \frac{L_0}{2}\norm{\hat{p}(\theta_i)-\theta_i}^2$.  By the
  optimality conditions of~\eqref{eq:hat-p-definition}, it follows
  that
  $(y-\hat{p}(\theta_i))^\top \widehat{\nabla V_0}(\theta_i) +
  \frac{1}{h}(\hat{p}(\theta_i)-\theta_i) \geq 0$ for all
  $y\in\real^d$ that are feasible
  for~\eqref{eq:hat-p-definition}~\citep[Theorem 12.3]{JN-SW:06}.
  Since $\theta_i$ is feasible for~\eqref{eq:hat-p-definition}
  (because $\widehat{V_1}(\theta_i)\leq0$), this implies
  \vspace{-0.1cm}
  \begin{align}\label{eq:v0-bound}
    V_0(\hat{p}(\theta_i))
    &\leq V_0(\theta_i) + (\nabla
      V_0(\theta_i)-\widehat{\nabla
      V_0}(\theta_i) )^\top
      (\hat{p}(\theta_i)-\theta_i)-(\frac{1}{h}-\frac{L_0}{2})\norm{\hat{p}(\theta_i)-\theta_i}^2. 
  \end{align}
  \vspace{-0.2cm}
  After adding up~\eqref{eq:v0-bound} for $i\in[k]$ and rearranging terms we get
  \begin{align}\label{eq:cumulative-p-thetai-0}
    \sum_{i=1}^k \norm{\hat{p}(\theta_i)-\theta_i}^2 = \frac{ V_0(\theta_1)-V_0(\hat{p}(\theta_{k})) }{ \frac{1}{h}-\frac{L_0}{2} }  
    +\sum_{i=1}^k \frac{(\nabla V_0(\theta_i)-\widehat{\nabla
    V_0}(\theta_i) )^\top (\hat{p}(\theta_i)-\theta_i ) }{
    \frac{1}{h}-\frac{L_0}{2} }. 
  \end{align}
  \vspace{-0.3cm}
  On the other hand, by~\cite[Theorem 1.6.2]{RD:10}, for any
  $q\in\{ 0, 1 \}$ and $\theta\in\real^d$ we have
  \begin{align*}
    \mathbb{E}\Bigl[\norm{ \nabla V_q(\theta)\!-\!\widehat{\nabla
    V_q}(\theta)}\Bigr] \! \leq \! \sqrt{ \mathbb{E}\Bigl[\norm{
    \nabla V_q(\theta)\!-\!\widehat{\nabla V_q}(\theta)}^2\Bigr] }, \  
    \mathbb{E}[|V_q(\theta)\!-\!\widehat{V_q}(\theta)|] \! \leq \! \sqrt{ \mathbb{E}[|V_q(\theta) \! - \! 
    \widehat{V_q}(\theta)|^2] }.
  \end{align*}
  Now we note that by
  Lemma~\ref{lem:estimator-statistical-properties},
  $N_i > \max\{ \frac{d\bar{\sigma}_1^2}{\epsilon},
  \frac{d\bar{\sigma}_0^2}{\epsilon}, \frac{ \tilde{\sigma}_1^2
  }{\epsilon} \}$ implies that
  $\Var(\widehat{V_1}(\theta_i)) \leq \epsilon^2$, and
  $\Var( (\widehat{\nabla V_q}(\theta_i))_j ) \leq
  \frac{\epsilon^2}{d} \ \forall j\in[d]$ and $q\in\{ 0, 1 \}$.  Since
  for each $i\in[d]$, the estimate $\widehat{V_1}(\theta_i)$ is
  unbiased, the fact that
  $\Var(\widehat{V_1}(\theta_i)) \leq \epsilon^2$ implies that
  $\mathbb{E}[|V_1(\theta_i)-\widehat{V_1}(\theta_i)|]\leq \epsilon$.
  Moreover, since the estimates $\widehat{\nabla V_q(\theta_i)}$ are
  unbiased for $q\in\{0,1\}$ and $i\in[k]$,
  $\mathbb{E} \biggl[ \norm{ \nabla V_q(\theta_i)-\widehat{\nabla
      V_q}(\theta_i)}^2 \biggr] = \sum_{j=1}^d \Var( (\widehat{\nabla
    V_q}(\theta_i))_j )$ for $q\in\{0,1\}$ and $i\in[k]$.  Now, the
  fact that
  $\Var( (\widehat{\nabla V_q}(\theta_i))_j ) \leq
  \frac{\epsilon^2}{d}$ for all $j\in[d]$ and $q\in\{0,1\}$ implies
  that
  $\mathbb{E}\Bigl[\norm{ \nabla V_q(\theta_i)-\widehat{\nabla
      V_q}(\theta_i)}\Bigr] \leq \epsilon$ for $q\in\{0,1\}$ and
  $i\in[k]$.  Moreover, as shown in the proof of
  Lemma~\ref{lem:estimator-statistical-properties}~\ref{it:gradient-estimator}
  $|(\widehat{\nabla V_q}(\theta_i))_j| \leq \bar{\sigma}_q$ for all
  $j\in[d]$, and since
  $(\nabla V_q(\theta_i))_j = \mathbb{E}[ (\widehat{\nabla V_q}(\theta_i))_j ]$, 
  it also follows that
  $|(\nabla V_q(\theta_i))_j| \leq \bar{\sigma}_q$. This implies that
  $\norm{\widehat{\nabla V_q}(\theta)} \leq M_q$ and
  $\norm{\nabla V_q(\theta)} \leq M_q$ for all $\theta\in\real^d$ and
  $q\in\{0,1\}$.  Similarly, as shown in
  Lemma~\ref{lem:estimator-statistical-properties}~\ref{it:value-function-estimator},
  $|\widehat{V_1}(\theta)|\leq \tilde{\sigma}_1$ for all
  $\theta\in\real^d$, and since
  $V_1(\theta) = \mathbb{E}[ \widehat{V_1}(\theta) ]$ it also follows
  that $|V_1(\theta)| \leq \tilde{\sigma}_1$ for all
  $\theta\in\real^d$.  Now, since the assumptions of
  Lemma~\ref{lem:closed-form-expression-next-iterate} hold, the
  expressions for $\hat{p}$ and $p$ provided in
  Lemma~\ref{lem:closed-form-expression-next-iterate} are satisfied,
  and for all $i\in[k]$ we have that $A_{\theta_i}\leq M_A$,
  $\hat{A}_{\theta_i}\leq M_A$, $B_{\theta_i}\leq M_B$,
  $\hat{B}_{\theta_i}\leq M_B$, $C_{\theta_i}\leq M_C$,
  $\hat{C}_{\theta_i}\leq M_C$, $\Delta_{\theta_i}\leq M_{\Delta}$,
  $\hat{\Delta}_{\theta_i}\leq M_{\Delta}$,
  $\widehat{\eta_B}\leq B_{\theta_i}$, $u_{\theta_i}\leq M_u$,
  $\norm{ \hat{p}(\theta_i)-\theta_i } \leq M_p$, and
  $\norm{ p(\theta_i)-\theta_i } \leq M_p$.  We also have
  $|A_{\theta_i}-\hat{A}_{\theta_i}| \leq K_A\epsilon$,
  $|B_{\theta_i}-\hat{B}_{\theta_i}| \leq K_B\epsilon$,
  $|C_{\theta_i}-\hat{C}_{\theta_i}| \leq K_C\epsilon$,
  $|\sqrt{\hat{\Delta}_{\theta_i}}-\sqrt{ \Delta_{\theta_i}}| \leq
  K_{\Delta}\epsilon$.  From this,~\eqref{eq:cumulative-p-thetai-0}
  implies that
  $\frac{1}{k}\sum_{i=1}^k
  \mathbb{E}[\norm{\hat{p}(\theta_i)-\theta_i}^2] \! \leq \!  \frac{ M_p
    \epsilon}{ \frac{1}{h} -\frac{L_0}{2} } \! + \!  \frac{1}{k}\Big(
  \frac{ \mathbb{E}[V_0(\theta_1)-V_0(\hat{p}(\theta_{k}))] }{
    \frac{1}{h}-\frac{L_0}{2} } \Big)$.  Since
  $|V_0(\theta)|\leq\tilde{\sigma}_0$ for all $\theta\in\real^d$ (as shown in the proof of 
  Lemma~\ref{lem:estimator-statistical-properties}), by choosing
  $k\geq\frac{2\tilde{\sigma}_0}{M_p\epsilon}$ we have
  $\frac{1}{k}\sum_{i=1}^k
  \mathbb{E}[\norm{\hat{p}(\theta_i)-\theta_i}^2] \leq
  \bar{M}_p \epsilon$, and by taking
  $j=\argmin{i\in[k]}{\mathbb{E}[\norm{\hat{p}(\theta_i)-\theta_i}]}$, we have
  $\mathbb{E}[\norm{\hat{p}(\theta_j)-\theta_j}] < \sqrt{ \bar{M}_p \epsilon }$.
  Now, by the triangle inequality
  $\norm{p(\theta_j)-\theta_j} \leq \norm{\hat{p}(\theta_j)-\theta_j}
  + \norm{\hat{p}(\theta_j)-p(\theta_j)}$.
  The rest of the proof focuses on finding an upper bound on $\mathbb{E}[\norm{p(\theta_j)-\hat{p}(\theta_j)}]$.
  To do so, we use Lemma~\ref{lem:closed-form-expression-next-iterate}.
  We first find an upper bound on $|\hat{u}_{\theta}-u_{\theta}|$.
  Note that if $\hat{C}_{\theta_j}<0$ and $C_{\theta_j}<0$, 
  by leveraging the expressions in Lemma~\ref{lem:closed-form-expression-next-iterate} we have
  $|\hat{u}_{\theta_j}\!-\!u_{\theta_j}| \leq \Bigl( \frac{ K_A (M_B
    \! + \! \sqrt{ M_{\Delta}} ) }{2\eta_A \widehat{\eta_A} } + \frac{
    K_B + K_{\Delta} }{2\eta_A} \Bigr)\epsilon$. On the other hand,
  if $\hat{C}_{\theta_j}\geq0$ and $C_{\theta_j}<0$, since
  $\Delta_{\theta_j}=B_{\theta_j}^2-4A_{\theta_j}C_{\theta_j}$,
  we get $|\hat{u}_{\theta_j}-u_{\theta_j}|\leq \frac{2 K_C \epsilon}{ \eta_B }$.
  Similarly, if $\hat{C}_{\theta_j}<0$ and $C_{\theta_j}\geq0$,
  $|\hat{u}_{\theta_j}-u_{\theta_j}| \leq \frac{2
    K_C}{\widehat{\eta_B}\epsilon}$.  All of this implies that
  $|\hat{u}_{\theta_j}-u_{\theta_j}| \leq K_u \epsilon$.  Now, note
  that
  $\norm{\hat{p}-p} \leq h\Big( \frac{\norm{\widehat{\nabla
        V_0}-\nabla V_0} + |\hat{u}-u|\norm{\widehat{\nabla V_1}} + u
    \norm{\widehat{\nabla V_1}-\nabla V_1}}{1 + \hat{u} } +
  \frac{|\hat{u}-u|}{(1+\hat{u})(1+u)} \Big)$,
  and since $u(\theta_j)\geq0$, $\hat{u}(\theta_j)\geq0$, we have 
  $\norm{\hat{p}(\theta_j)-p(\theta_j)}\leq h K_p \epsilon$,
  and $\mathbb{E}[\norm{\hat{p}(\theta_j)-p(\theta_j)}]\leq h K_p \epsilon$,
  from where it follows that $\mathbb{E}[\norm{ p(\theta_j)-\theta_j }] \leq \epsilon^*$.
  Since $\norm{ p(\theta_j)-\theta_j }$ is not random, $\norm{ p(\theta_j)-\theta_j } \leq \epsilon^*$.
\end{proof}

\vspace{-0.2cm}
Proposition~\ref{prop:convergence-in-expectation} ensures that if both
$k$ and each $N_i$, $i\in[k]$, are sufficiently large, RL-SGF
returns a point arbitrarily close to a KKT point
of~\eqref{eq:safe-RL-problem-parameterized-policies}.
In order to detect the index $j\in[k]$ for which 
$\mathbb{E}[\norm{\hat{p}(\theta_j)-\theta_j}] \leq \sqrt{ \bar{M}_p \epsilon }$,
we can compute different realizations of 
$\hat{p}(\theta_i)$ for every $i\in[k]$, compute an empirical estimate of 
$\mathbb{E}[\norm{\hat{p}(\theta_i)-\theta_i}]$, and check whether
$\mathbb{E}[\norm{\hat{p}(\theta_i)-\theta_i}] \leq \sqrt{ \bar{M}_p \epsilon }$.
We note that the issue of identifying the index for which the convergence criterion is satisfied also 
appears in the convergence results of policy gradient 
(cf.~\citep[Theorem 4.3]{KZ-AK-HZ-TB:20}).

%
%

\vspace{-0.2cm}

\section{Simulations}\label{sec:simulations}
%




%
%
In this section we test RL-SGF in a simple navigation 2D example.

\noindent\textbf{Environment:} We consider a continuous state and
action space
environment. We work with two different dynamics:

\vspace{-8pt}
\begin{itemize}
  \item \emph{Single integrator}: $s_{t+1} = s_t + 0.1 a_t$ with
    $s_t\in\Sc = \real^2$ and $a_t \in \Ac = [-5,5]^2$ for
    $t\in\mathbb{Z}_{>0}$;
  \item \emph{Differential-drive robot}:
    $s_{t+1} = s_t + 0.2\: v_t (\cos\theta_t,\sin\theta_t)$ and
    $\theta_{t+1} = \theta_t + 0.2\:\omega_t$ with
    $s_t = (x_t,\theta_t) \in \Sc = \real^2 \times [-\pi,\pi]$,
    $a_t = (v_t,\omega_t) \in \Ac = [0,5] \times
    [-20\pi/180,20\pi/180]$ for $t\in\mathbb{Z}_{>0}$.
\end{itemize}
\vspace{-8pt}
\noindent For the single integrator, $s_t$ is the position of the
agent at time $t$, and for the differential-drive robot, $x_t$ denotes the agent's
position at time $t$ and $\theta_t$ represents its orientation at time
$t$.  For both dynamics, we refer to the position component of the
state as $s_x \in \real^2$.  Given a target state $x^* = (8, 8)$, we
define $R_0(s,a,s^{\prime}) = \max\{-\norm{s_x-x^*},-10\}$.  We
consider a set of five circular and rectangular obstacles denoted by
$\{ \Oc_j \}_{j=1}^5$ as depicted in red in
Figure~\ref{fig:policy_evolution}, and we let the safe set $\Cc$ be
the subset of $[0,10]\times[0,10]\subset\real^2$ not covered by the
obstacles.  We take $T=50$ as the time horizon and $\gamma=0.98$. To
define $R_1(s,a,s^{\prime})$, we let $d_{\min}(s)$ be the minimum
distance between $s_x$ and any obstacle and set
$R_1(s,a,s^{\prime}) = \beta \left(e^{-d_{\min}(s)}-1 \right)$ if
$s_x\in\Cc$ and $R_1(s,a,s^{\prime}) = 1-\beta$ if $s_x\notin\Cc$,
with $\beta>0$ being a design parameter.  This choice of $R_1$ is
inspired by~\cite{SP-MCF-LFOC-AR:23}, and in the limit $\gamma\to 1$
guarantees that trajectories generated by policies satisfying the
associated safety
constraint~\eqref{eq:safe-RL-problem-parametrized-policies-constraint}
do not collide with obstacles with probability greater than
$1-\beta T$. Since $R_0$ and $R_1$ are bounded,
Assumption~\ref{as:boundedness-reward} holds.

\noindent\textbf{Policy:}
We consider a truncated Gaussian policy
class$\pi_\theta(a|s) = \bar{C}
e^{-\frac{1}{2}(a-\mu_\theta(s))^\top\Sigma^{-1}(a-\mu_\theta(s))}$,
if $a\in\Ac$ and $\pi_{\theta}(a|s) = 0$ if $a\notin\Ac$, where
$\Sigma=0.5 \mathbf{I}_2$, and $\bar{C}$ is a normalization
constant. The mean function is defined as
$\mu_\theta(s) = \sum_{i=0}^{N_c} \tanh(\theta_i) \exp\big( -
\frac{\norm{s_x-c_i}^2}{2\sigma^2}\big)$, where $\tanh$ is applied
componentwise, $\theta_i\in\real^2$ are the trainable parameters, and
$\{ c_i \}_{i=1}^{N_c}$ is a set of points in $\Ac$.  For the
single-integrator dynamics, the set $\{c_i\}$ is constructed by
dividing the safe set $\Cc$ into a grid with 20 divisions per
dimension, resulting in $N_c = 400$ points. For the differential-drive
dynamics, an additional dimension with 10 divisions over the range
$[-\pi,\pi]$ is added, leading to $N_c = 4000$ points.  This policy
parameterization satisfies the required properties, i.e.,
$\pi_{\theta}$ satisfies
Assumption~\ref{as:differentiability-lipschitzness-policy}. For the
single integrator, the initial policy parameter $\theta_1$ is
constructed as follows.  Let $\rho, f_{\text{max}}\in \real_{>0}$,
$q_j$ be the center of obstacle $\Oc_j$ and
$v_j^i = (c_i-q_j)/\norm{c_i-q_j}$.  Now, for $i\in[400]$ and
$j\in[5]$, we let
$Q_j^i = f_{\text{max}}(1 - d(c_i,\Oc_j)/\rho) v_j^i$ if
$d(c_i,\Oc_j)<\rho$ and $Q_j^i=0$ otherwise and set
$(\theta_1)_i = \sum_{j=1}^5 Q_j^i$.  As shown in
Figure~\ref{fig:policy_evolution} (left), the resulting mean
$\mu_{\theta_1}$ of $\pi_{\theta_1}$ promotes actions that point away
from the obstacles, yielding a safe initial policy (it can be checked
that for a sufficiently large $N_1$, $\widehat{V_1}(\theta_1)<0$).
For the differential-drive dynamics and due the difficulty of
obtaining an initial safe policy (other than the trivial one where
zero inputs are taken at any state),
%
%
we initizalize parameters $\theta_1$ randomly. This typically results
in an initial policy which is unsafe, and we use this to illustrate
the ability of RL-SGF to recover.

\begin{figure}[htb]
  \centering
  \includegraphics[width=1\textwidth]{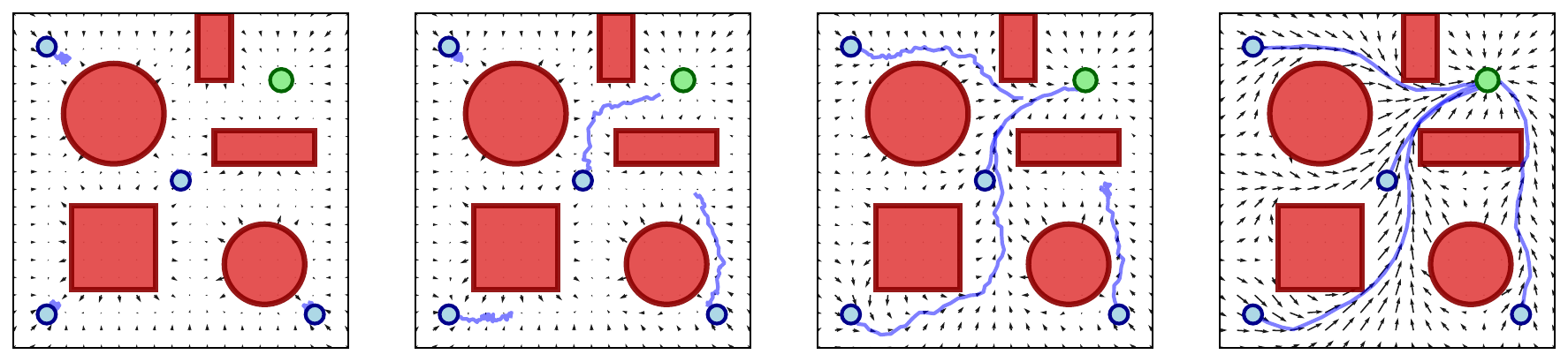}
  \caption{Evolution of the policy $\pi_{\theta}$ generated by RL-SGF
    at diferent stages of the learning process for the
    single-integrator dynamics. The arrows indicate the mean
    $\mu_{\theta}$, the target state $x^*=(8,8)$ is marked in green
    and four trajectories starting at $(1,1),(5,5),(9,1)$ and $(1,9)$
    are plotted in blue. Obstacles are depicted in red.  From left to
    right: initial policy and after 200, 500, and 1500 iterations.}
    \label{fig:policy_evolution}
  \end{figure}

\noindent\textbf{Training:} For the single-integrator case, we
take $N = 100$, $\alpha = 1$, $\beta=0.01$, $h=0.5$ and $K = 1500$. We
compare its performance with the primal-dual method in~\cite[Algorithm
2]{SP-MCF-LFOC-AR:23} with $\eta_\theta=\eta_\lambda=0.001$ as the
primal and dual step sizes and Constrained Policy Optimization
(CPO)~\cite{JA-DH-AT-PA:17}, where we take $\delta = 0.15$ and
$H=\mathbf{I}_d$. To make a fair comparison, we use the same estimates
introduced in Lemma~\ref{lem:estimator-statistical-properties} for all
algorithms, setting $b(s)=0$. For the differential-drive robot case,
we use $N=200$, $\alpha = 9$, $\beta=0.05$, $h=0.1$ and $K=4000$,
keeping all other parameters unchanged.
\begin{figure}[htb]
  \centering
  \includegraphics[width=1\textwidth]{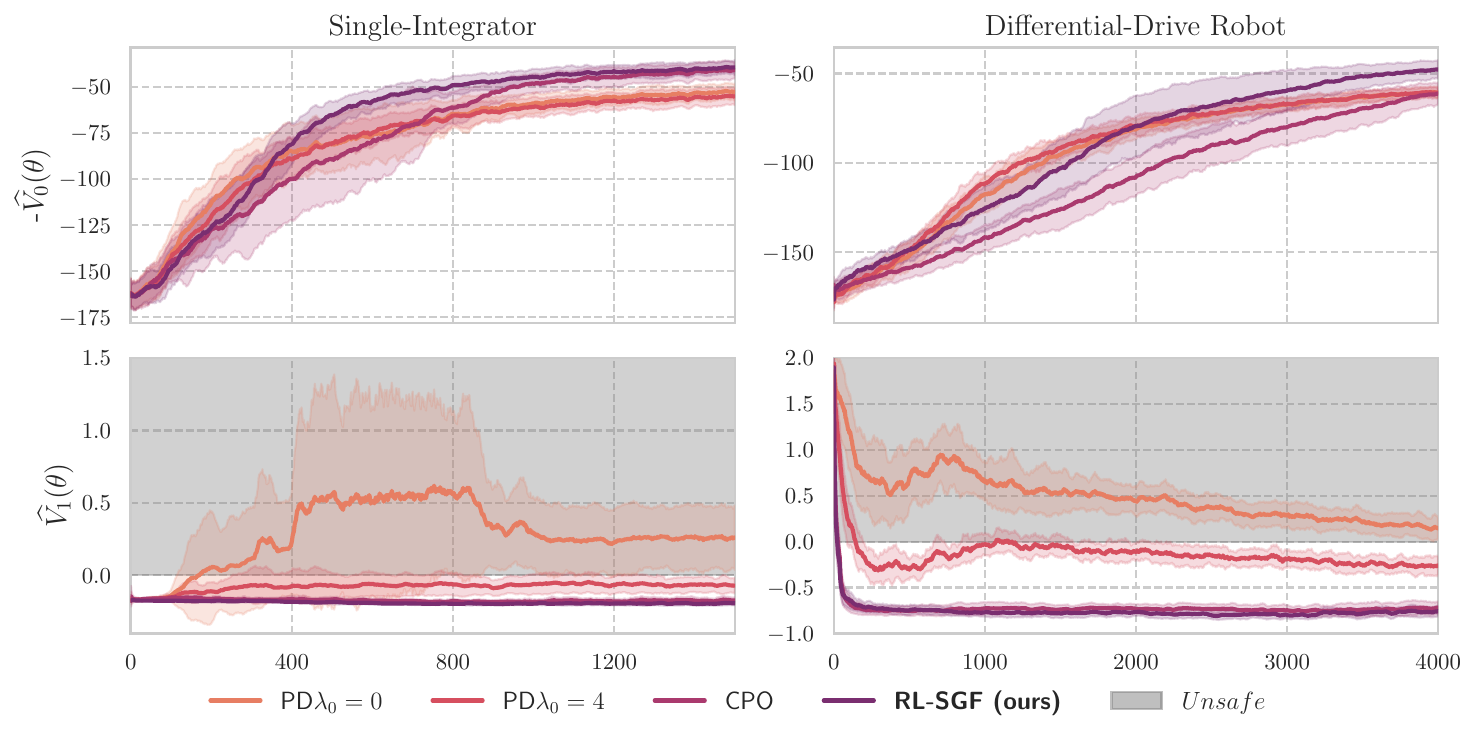}
  \caption{Comparison between RL-SGF, primal-dual approaches (PD), and
    Constrained Policy Optimization (CPO).  Evolution of the average
    return $\widehat{V_0}(\theta)$ and safety measure
    $\widehat{V_1}(\theta)$ for single-integrator (left) and
    differential-drive (right) dynamics. The initial dual variable is
    denoted $\lambda_0$.  Shaded areas represent 95\% confidence
    intervals over 5 runs. The unsafe region
    ($\widehat{V_1}(\theta)>0$) is in gray.}
  \vspace*{-2ex}
    \label{fig:PD-RF-SGF-comparison}
\end{figure}

\noindent\textbf{Results:} Figure~\ref{fig:policy_evolution} shows four
different snapshots describing the policies obtained during the
training process of RL-SGF. Qualitatively, we observe how the vector
field described by the mean of the Gaussian policy always points away
from the obstacles, suggesting that all the policies are safe. We also
observe how policies trained with a larger number of iterations result
in trajectories that make faster progress towards the target point, as
expected.  Figure~\ref{fig:PD-RF-SGF-comparison} compares RL-SGF with
the primal-dual method in~\cite[Algorithm 2]{SP-MCF-LFOC-AR:23} and
CPO~\citep{JA-DH-AT-PA:17} using the same estimators. The initial dual
variable $\lambda_0$ is set to $0$ and $4$.  In both cases, the
primal-dual method results in significant constraint violations,
whereas RL-SGF and CPO maintain safety during training. In
Figure~\ref{fig:PD-RF-SGF-comparison} (right), notice also how RL-SGF
and CPO recover from initial unsafety and reach a safe policy much
faster than the primal-dual algorithm.  Simulations also show that,
for the single integrator, the estimated cumulative reward
$\widehat{V_0(\theta)}$ for RL-SGF becomes larger than that of the
primal-dual algorithm after 300 steps and outperforms CPO throughout
the entire training process.  This is even more apparent in the
differential-drive robot case.  Table~\ref{tab:results} presents the
average performance over the last 100 training steps and the
percentage of safe policies (i.e., $\widehat{V_1}(\theta)\le0$)
observed during the training process across different
algorithms. RL-SGF displays the highest safety performance for both
dynamics, while outperforming other approaches in terms of average
return.


\begin{table}[htb]
  \centering
  \begin{tabular}{l|cccc}
    \textbf{Dynamics}
    & PD $\lambda_0{=}0$ & PD $\lambda_0{=}4$
    &
      CPO & \textbf{RL-SGF (ours)} \\[0.3em] 
    \hline
    Single-integrator &
    -52.99 (26.68\%) &
    -55.53 (85.37\%) &
    -41.33 (99.88\%) &
    \textbf{-39.95 (99.96\%)} \\
    
    Differential-drive \!\!&
    -60.54 (2.17\%) &
    -60.66 (83.28\%) &
    -61.52 (99.72\%) &
    \textbf{-48.01 (99.83\%)} \\
  \end{tabular}
  \vspace*{-1ex}
  \caption{Average performance (\( -\widehat{V_0}(\theta) \)) over the
    last 100 training steps. In parentheses, we report the percentage
    of safe policies (i.e., $\widehat{V_1}(\theta) \leq 0$). Results
    are averaged over 5 random seeds.}\label{tab:results}
\end{table}

Finally, we study the effect on the performance of RL-SGF of
the number of episodes used in the estimates.  For simplicity, we
assume that the number of episodes at each iteration is the same,
i.e., $N_i = N$ for all $i\in[k]$.  Figure~\ref{fig:N-comparison}
shows that, by increasing~$N$, we obtain faster convergence and less
variance on the estimators while reducing the number of unsafe
policies, as expected. In addition, all tested values of $N$ lead to
safe policies during the training process, with minimal constraint
violations even for $N=10$ (less than $0.4\%$). This could be an
indication that, for this environment, the bounds in
Proposition~\ref{prop:safety-guarantees} are conservative. In the case
of $N=400$, RL-SGF achieves zero constraint violation during training.



\begin{figure}[htb]
  \centering
  \includegraphics[width=1\textwidth]{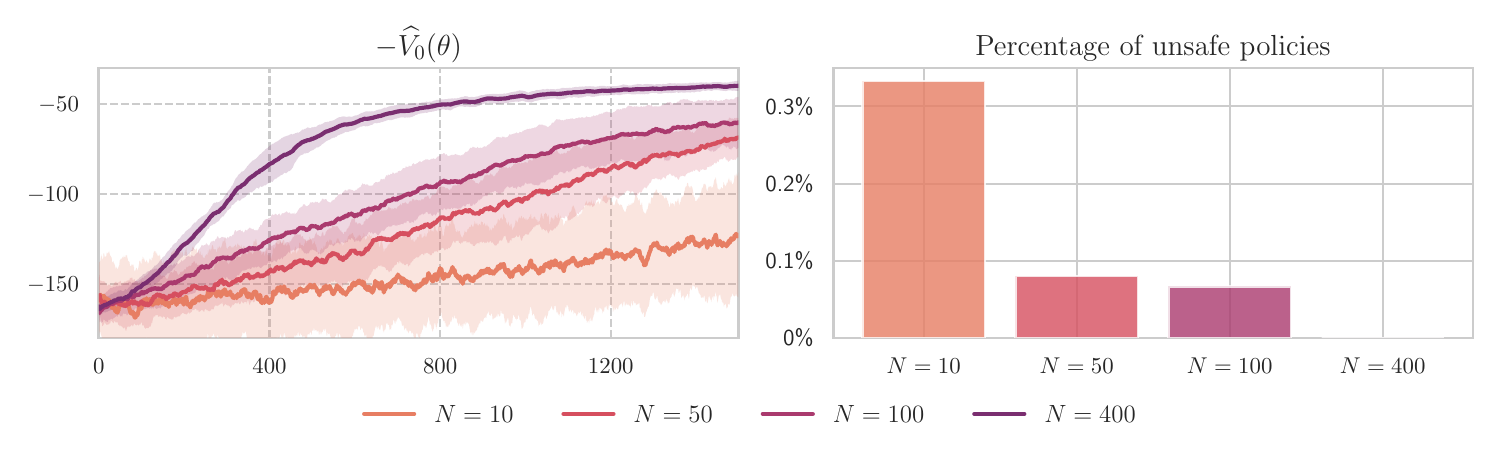}
  \vspace*{-2ex}
  \caption{Illustration of the performance of RL-SGF as a function of
    the number of episodes $N$ used in the estimates of the value
    functions and their gradients. Left plot shows the evolution of
    the average return $\widehat{V_0}(\theta)$ and right plot shows
    the safety measure $\widehat{V_1}(\theta)$ during training. Shaded
    areas are 95 $\%$ confidence intervals over 5 runs.}
    \label{fig:N-comparison}
\end{figure}

\section{Conclusions}

We have introduced RL-SGF, a constrained RL algorithm with anytime
guarantees. At every iteration, RL-SGF uses episodic data to construct
estimates of the value functions and their gradients associated with
the objective and safety constraints. By
deriving appropriate statistical properties of such estimates, we
show that if the number of episodes at each iteration is
sufficiently large, the algorithm returns a safe policy in the next
iteration with arbitrarily high probability, and
converges to a KKT point.  We have illustrated the performance of
RL-SGF in a simple 2D navigation example.  
As future work, we plan to improve the sample-complexity guarantees by
using off-policy methods, explore schemes that adaptively tune the
algorithm parameters to improve convergence, and extend this approach
to other types of safety constraints, such as probabilistic or
conditional value-at-risk.  We also plan to apply this
algorithm to more complex safety-critical systems.

\vspace*{-1ex}
\section*{Acknowledgments}
This work was partially supported by ONR Award N00014-23-1-2353.
\bibliography{../bib/alias,../bib/Main,../bib/Main-add,../bib/FB,../bib/JC}

\appendix
\section{Lipschitz constants}

\begin{lemma}\longthmtitle{Lipschitzness of gradient of value functions}\label{lem:lipschitzness-value-functions}
  Suppose that Assumptions~\ref{as:boundedness-reward}
  and~\ref{as:differentiability-lipschitzness-policy} hold. Let $q\in\{ 0, 1\}$. 
  Then, $\nabla V_q$ is Lipschitz with constant 
  \begin{align*}
      B_q L \Big( \frac{1-\gamma^T}{1-\gamma} \Big)^2 + 
      2 B_q\tilde{B}^2 \gamma \frac{ 1-(T+1)\gamma^T + T\gamma^{T+1} }{ (1-\gamma)^2 } +
      B_q \tilde{B}^2 \Big( \frac{1-\gamma^T}{1-\gamma} \Big)^2.
  \end{align*}
  \smallskip
\end{lemma}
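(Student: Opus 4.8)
The plan is to establish a uniform bound on the (almost-everywhere defined) Hessian $\nabla^2 V_q$ over $\real^d$ and read the claimed Lipschitz constant of $\nabla V_q$ off this bound. The starting point is the Policy Gradient Theorem~\citep[Section 13.2]{RSS-AGB:18}: writing $r_t^q := R_q(s_t,a_t,s_{t+1})$ and $\nabla\Phi_t(\theta) := \nabla\Phi_{a_t,s_t}(\theta)$, and using that only the policy factors $\pi_\theta(a_k\mid s_k)$ with $k\le t$ depend on $\theta$, one obtains
$$\nabla V_q(\theta) = (-1)^{q+1}\sum_{t=0}^T \gamma^t\, \mathbb{E}\Big[r_t^q \sum_{k=0}^t \nabla\Phi_k(\theta)\Big] = (-1)^{q+1}\,\mathbb{E}\Big[\sum_{k=0}^T \nabla\Phi_k(\theta) \sum_{t=k}^T \gamma^t r_t^q\Big],$$
which is exactly the rewards-to-go form underlying the gradient estimator in Lemma~\ref{lem:estimator-statistical-properties}~\ref{it:gradient-estimator} (the overall sign follows the convention in~\eqref{eq:safe-RL-problem-parameterized-policies} and is immaterial for the Lipschitz constant).

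I would then differentiate once more. For each $t$ the inner expectation is over the law $\rho_t^\theta$ of $(s_0,a_0,\dots,s_t,a_t,s_{t+1})$, whose score is $\nabla_\theta \log \rho_t^\theta = \sum_{k=0}^t \nabla\Phi_k(\theta)$; differentiating under the integral (legitimate since Assumption~\ref{as:boundedness-reward} bounds the rewards and Assumption~\ref{as:differentiability-lipschitzness-policy} makes $\log\pi_\theta$ continuously differentiable with bounded, Lipschitz gradient, supplying an integrable dominating function) yields, wherever the relevant second derivatives exist,
$$\nabla^2 V_q(\theta) = (-1)^{q+1}\sum_{t=0}^T \gamma^t\, \mathbb{E}\Big[ r_t^q\Big( \sum_{k=0}^t \nabla^2\Phi_k(\theta) + \Big(\sum_{k=0}^t \nabla\Phi_k(\theta)\Big)\Big(\sum_{j=0}^t \nabla\Phi_j(\theta)\Big)^{\!\top}\Big)\Big].$$
Splitting the outer-product sum into its diagonal ($k=j$) and off-diagonal ($k\ne j$) parts isolates the three contributions that should become, respectively, the $B_q L(\frac{1-\gamma^T}{1-\gamma})^2$ term (from $\nabla^2\Phi$), the $B_q\tilde B^2(\frac{1-\gamma^T}{1-\gamma})^2$ term (diagonal), and the $2B_q\tilde B^2\gamma\frac{1-(T+1)\gamma^T+T\gamma^{T+1}}{(1-\gamma)^2}$ term (off-diagonal).

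Next I would bound each piece in operator norm: $|r_t^q| < B_q$ by Assumption~\ref{as:boundedness-reward}; $\norm{\nabla^2\Phi_k(\theta)} \le L$ since $\nabla\Phi$ is $L$-Lipschitz; and $\norm{\nabla\Phi_k(\theta)}$ is controlled by the componentwise bound $\tilde B$ in Assumption~\ref{as:differentiability-lipschitzness-policy}. For the off-diagonal terms I would exploit that the score functions are martingale increments, $\mathbb{E}[\nabla\Phi_k(\theta)\mid s_0,a_0,\dots,s_k]=0$, so that the cross-expectations collapse after conditioning and one does not pick up a spurious horizon factor. Collapsing the resulting finite geometric and arithmetico–geometric sums (e.g.\ $\sum_{t=0}^{T-1}\gamma^t = \frac{1-\gamma^T}{1-\gamma}$, $\sum_{t=0}^{T} t\gamma^t = \gamma\frac{1-(T+1)\gamma^T + T\gamma^{T+1}}{(1-\gamma)^2}$, and $\sum_{t}(t+1)\gamma^t \le (\sum_t \gamma^t)^2$) then gives a uniform bound $\norm{\nabla^2 V_q(\theta)} \le C_q$ with $C_q$ the stated expression. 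Since $V_q \in C^1$ and $\nabla V_q$, being an integral of quantities that are Lipschitz in $\theta$ with an integrable envelope, is locally Lipschitz and hence absolutely continuous on segments, the fundamental theorem of calculus upgrades this to $\norm{\nabla V_q(\theta_1) - \nabla V_q(\theta_2)} \le C_q\norm{\theta_1-\theta_2}$ for all $\theta_1,\theta_2$; alternatively one can bypass $\nabla^2 V_q$ and estimate $\norm{\nabla V_q(\theta_1)-\nabla V_q(\theta_2)}$ directly, splitting it into a term where only the integrand $\sum_k \nabla\Phi_k$ changes (bounded via $L$) and a term where only the trajectory law $\rho_t^\theta$ changes (bounded via the score bounds).

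The delicate part is the second-order expansion together with the ensuing bookkeeping: one must track precisely which factors carry the $\theta$-dependence, invoke causality and the martingale structure at the right place so the cross terms do not inflate the horizon dependence, and then manipulate the finite geometric and arithmetico–geometric sums so the three contributions collapse exactly into the claimed constant rather than a looser bound carrying an extra factor of $(1-\gamma)^{-1}$. By contrast, justifying differentiation under the integral sign and the passage from an almost-everywhere Hessian bound to Lipschitzness of $\nabla V_q$ are routine consequences of Assumptions~\ref{as:boundedness-reward}--\ref{as:differentiability-lipschitzness-policy} and should merely be stated carefully.
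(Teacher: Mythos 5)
Your key step fails as stated: the off-diagonal cross-expectations do not collapse. In your Hessian expression the outer-product part is $\mathbb{E}\bigl[r_t^q\,M_tM_t^\top\bigr]$ with $M_t=\sum_{k=0}^t\nabla\Phi_k(\theta)$, and for $k<j\le t$ the term $\mathbb{E}\bigl[r_t^q\,\nabla\Phi_k\nabla\Phi_j^\top\bigr]$ is not zero: conditioning on $(s_0,a_0,\dots,s_j)$ leaves $\mathbb{E}\bigl[r_t^q\,\nabla\Phi_j^\top\mid s_0,a_0,\dots,s_j\bigr]$, and since $r_t^q=R_q(s_t,a_t,s_{t+1})$ depends on the part of the trajectory generated after the action $a_j$, this conditional expectation does not vanish --- the non-vanishing of precisely these reward--score correlations is what makes the policy gradient nonzero in the first place. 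Without that cancellation, counting the $t(t+1)$ off-diagonal pairs produces a contribution of order $\sum_t t^2\gamma^t$, i.e.\ an extra factor of $(1-\gamma)^{-1}$ beyond the claimed constant. The step is repairable: pull the reward out first, $\norm{\mathbb{E}[r_t^q M_tM_t^\top]}\le B_q\,\mathbb{E}[\norm{M_t}^2]$, and only then invoke $\mathbb{E}[\nabla\Phi_j\mid s_0,a_0,\dots,s_j]=0$ (now with no reward present) to get $\mathbb{E}[\norm{M_t}^2]=\sum_{k=0}^t\mathbb{E}[\norm{\nabla\Phi_k}^2]\le (t+1)$ times the squared score bound. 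Summed against $\gamma^t$ this gives a bound of the form $B_q(L+\tilde{B}^2)\sum_{t=0}^T(t+1)\gamma^t$, which sits within the stated constant, but your term-by-term identification of the three summands (in particular attributing $2B_q\tilde{B}^2\sum_t t\gamma^t$ to the off-diagonal part via vanishing cross terms) does not survive.

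Beyond this, your route is genuinely different from the paper's, which never forms a second derivative: it writes $\nabla V_q$ via the Policy Gradient Theorem and bounds $\norm{\nabla V_q(\theta_1)-\nabla V_q(\theta_2)}$ directly, following the steps of the proof of \citep[Lemma 3.2]{KZ-AK-HZ-TB:20} --- one term where the score factor $\nabla\log\pi_\theta(a_t|s_t)$ changes (controlled by $L$) and one where the trajectory density changes (controlled by $\tilde{B}$, which is where the $(t+\tau+1)$ factor comes from) --- and then sums the geometric and arithmetico--geometric series. This is exactly the ``alternative'' you mention in your closing sentence, and it has the advantage of not requiring $\nabla^2\Phi_{a,s}$ at all: Assumption~\ref{as:differentiability-lipschitzness-policy} only guarantees $\Phi_{a,s}\in C^1$ with $L$-Lipschitz gradient, so your Hessian-based argument needs the Rademacher/almost-everywhere patch you allude to (with the additional care that the exceptional null set depends on $(a,s)$), whereas the first-order argument avoids this entirely.
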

\begin{proof}
  By the Policy Gradient Theorem~\citep[Section 13.2]{RSS-AGB:18}, for any $\theta_1\in\real^d$,
  \begin{align*}
      \nabla V_q(\theta) \! = \! \sum_{t=0}^T \! \sum_{\tau=0}^{T} \int\limits_{\Ic} \! \!
      \gamma^{t+\tau} R_q(s_{t+\tau}, \! a_{t+\tau} , \! s_{t+\tau+1}) \nabla \! \log\pi_{\theta}(a_t | s_t)p_{\theta} d\sigma,
  \end{align*}
  where $\Ic = \Sc^{T+1}\times\Ac^{T+1}$, $d\sigma = ds_0 ds_1 \hdots ds_T da_0 da_1 \hdots da_T$, and 
  \begin{align*}
      p_{\theta} = \Bigg( \prod_{k=0}^{t+\tau} P(s_{k+1},s_k,a_k) \Bigg) \Bigg( \prod_{k=0}^{t+\tau}\pi_{\theta}(a_k|s_k) \Bigg) \eta(s_0).
  \end{align*}
  Now, by following the same steps as in the proof of~\citep[Lemma 3.2]{KZ-AK-HZ-TB:20}, 
  it follows that 
  \begin{align*}
      \norm{\nabla V_q(\theta_1)-\nabla V_q(\theta_2)} &\leq \sum_{t=0}^T \sum_{\tau=0}^T 
      \gamma^{t+\tau} B_q L\norm{\theta_1-\theta_2} + 
      \sum_{t=0}^T \sum_{\tau=0}^T \gamma^{t+\tau} B_q \tilde{B}^2 (t+\tau+1)\norm{\theta_1-\theta_2}
  \end{align*}
  Now, by using the formulas 
  \begin{align*}
      &\sum_{t=0}^T \gamma^t = \frac{1-\gamma^T}{1-\gamma}, \quad 
      \sum_{t=0}^T t \gamma^t = \gamma \frac{ 1-(T+1)\gamma^T + T\gamma^{T+1} }{ (1-\gamma)^2 },
  \end{align*}
  we get 
  \begin{align*}
      \norm{\nabla V_q(\theta_1)-\nabla V_q(\theta_2)} &\leq B_q L \norm{\theta_1-\theta_2} \Big( \frac{1-\gamma^T}{1-\gamma} \Big)^2 \\
      & + 2 B_q\tilde{B}^2 \norm{\theta_1-\theta_2} \gamma \frac{ 1-(T+1)\gamma^T + T\gamma^{T+1} }{ (1-\gamma)^2 } \\
      & + B_q \tilde{B}^2 \norm{\theta_1-\theta_2} \Big( \frac{1-\gamma^T}{1-\gamma} \Big)^2,
  \end{align*}
  from where the result follows.
\end{proof}

\end{document}